\title{Efficient Estimation of Graph Trussness} 
\author{Alessio Conte}{University of Pisa}{conte@di.unipi.it}{}{}
\author{Roberto Grossi}{University of Pisa}{grossi@di.unipi.it}{}{}
\author{Andrea Marino}{University of Florence}{andrea.marino@unifi.it}{}{}
\author{Luca Versari}{University of Pisa and Google Research Zurich}{luca.versari@di.unipi.it}{}{}
\newcommand{\shortauthors}{A. Conte,   R. Grossi, A. Marino, L. Versari}
\authorrunning{\shortauthors}
\keywords{Trussness, $k$-trusses, triangle-free, triangles, approximation, graph algorithms}
\begin{document}




\def\poly{\operatorname{poly}}
\def\polylog{\operatorname{polylog}}
\newcommand{\ktruss}{$k$-truss\xspace}
\newcommand{\ktrusses}{$k$-trusses\xspace}
\newcommand{\truss}{\ensuremath{\mathit{truss}}\xspace}
\newcommand{\degr}{\delta\xspace}

\newcommand{\tri}{^{^{\triangle}}\xspace}
\newcommand{\gtri}{G\tri\xspace}
\newcommand{\gtrip}{G^{^{\triangle}}_p\xspace}
\newcommand{\gtripprime}{G'^{^{\triangle}}_p\xspace}
\newcommand{\minsupp}{\textsc{min-sup}\xspace}
\newcommand{\supp}{\textsc{sup}\xspace}
\newcommand{\support}{\ensuremath{\supp}\xspace}
\newcommand{\whp}{w.h.p.\xspace}
\newcommand{\mirr}[1]{^{\times#1}}
\newcommand{\gmirr}[1]{{G\mirr{#1}}}
\newcommand{\epsp}{\epsilon'}

\maketitle

\begin{abstract}
A $k$-truss is an edge-induced subgraph $H$ such that each of its edges belongs to at least $k-2$ triangles of $H$.
This notion has been introduced around ten years ago in social network analysis and security, as a form of cohesive subgraph that is rich of triangles and less stringent than the clique. The \emph{trussness} of a graph is the maximum $k$ such that a $k$-truss exists. 

The problem of computing $k$-trusses has been largely investigated from the practical and engineering point of view. On the other hand, the theoretical side of the problem has received much less attention, despite presenting interesting challenges.
The existing methods share a common design, based on iteratively removing the edge with smallest support, 
where the support of an edge is the number of triangles containing it. 

The aim of this paper is studying algorithmic aspects of graph trussness.
While it is possible to show that the time complexity of computing exactly the graph trussness and that of counting/listing all triangles is inherently the same, we provide efficient algorithms for estimating its value, under suitable conditions, with significantly lower complexity than the exact approach. 
In particular, we provide a $(1 \pm \epsilon)$-approximation algorithm that is asymptotically faster than the exact approach, on graphs which contain $\omega(m \polylog(n))$ triangles, and has the same running time on graphs that do not. 
For the latter case, we also show that it is impossible to obtain an approximation algorithm with faster running time than the one of the exact approach when the number of triangles is $O(m)$, unless well known conjectures on triangle-freeness and Boolean matrix multiplication are false.
\end{abstract}

\begin{figure*}
    \centering
    \raisebox{-0.5\height}{\includegraphics[height=0.3\textwidth]{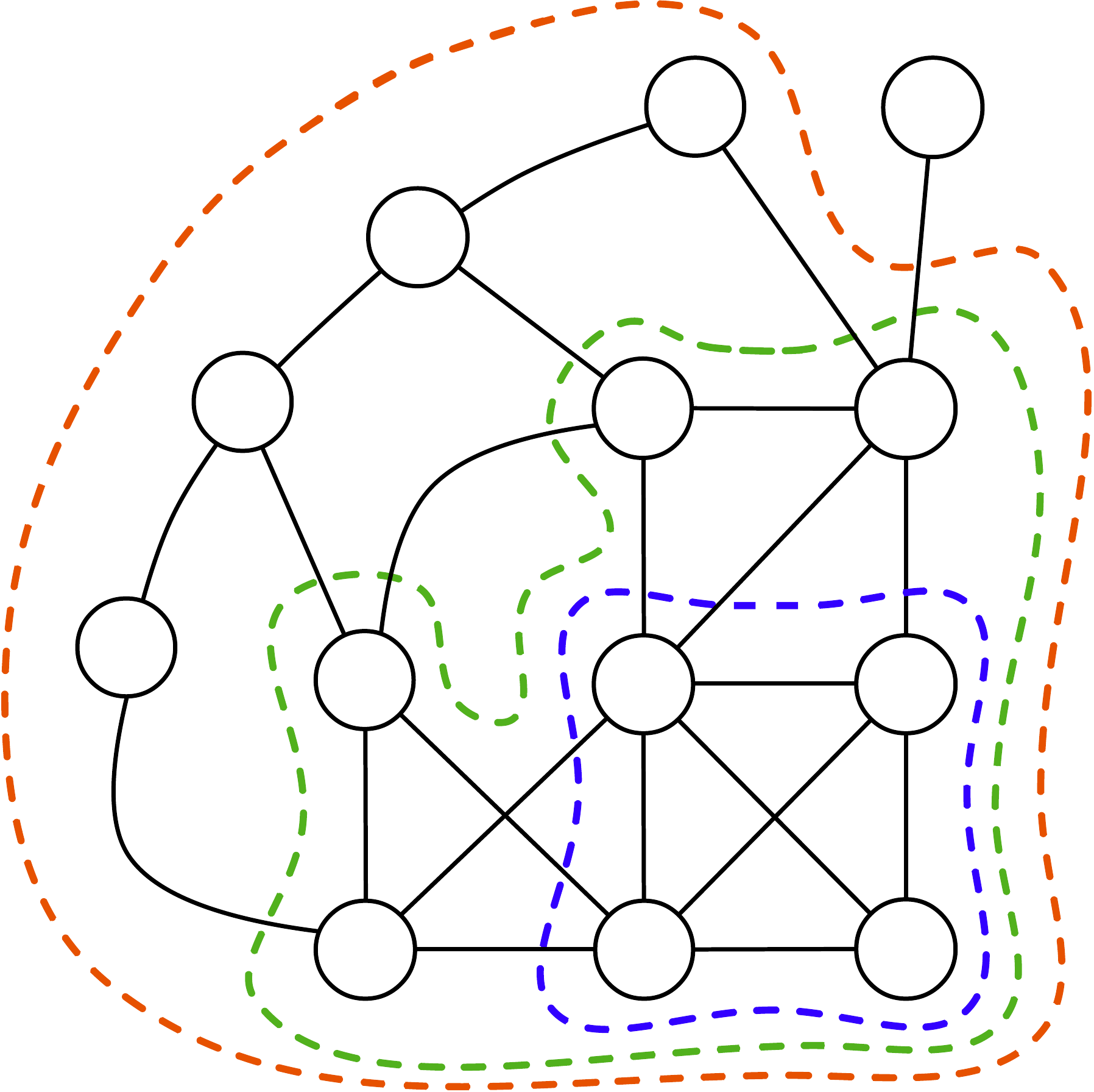}}
    \hspace{1in}
    \raisebox{-0.5\height}{\begin{tikzpicture}
    \node[] at (0, 0) {\includegraphics[height=0.3\textwidth]{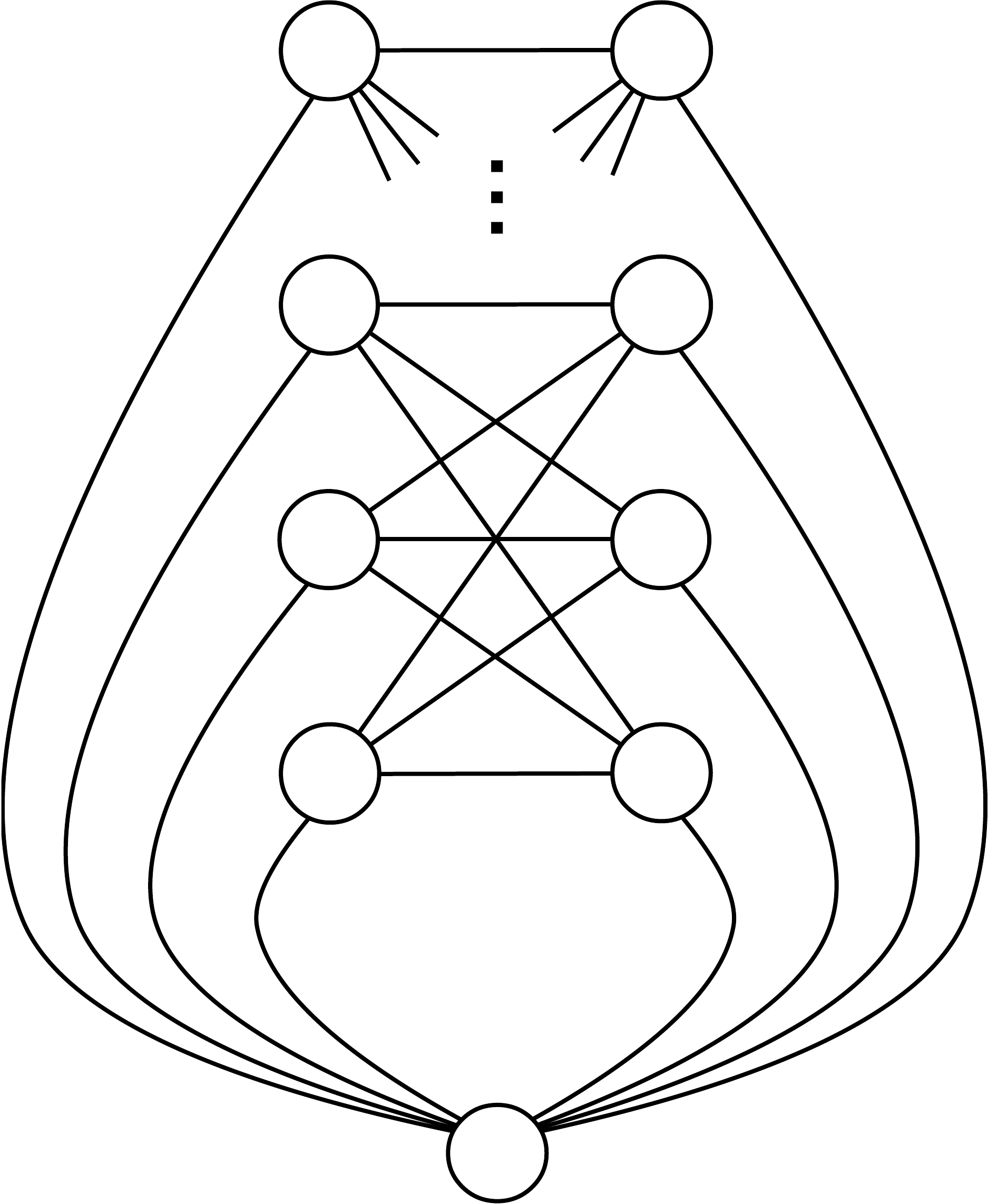}};
    \node[] at (0, -2.4) {\scalebox{.9}{$v_x$}};
    \end{tikzpicture}}
    \caption{Left: a graph containing (dashed lines, starting from the outside) a $2$-core (orange), $1$-truss (green) and $4$-clique (blue); a $4$-clique is also a $2$-truss --according to the $0$-based definition--, so the graph trussness is $t_G=2$. Note that a \ktruss is a subgraph of a $(k+1)$-core, and that a $k$-clique is a $(k-2)$-truss. Right: a graph (complete bipartite plus an extra node $v_x$) with trussness $t_G=1$, arboricity $\alpha_G = \Theta(n)$, and $T_G = \Theta(n^2)$ triangles.
    }
    \label{fig:example}
    \label{fig:example-trussness}
    \label{fig:truss-decomposition}
\end{figure*}

\section{Introduction}
\label{sec:introduction}

Consider an undirected graph $G=(V(G),E(G))$ with $n = |V(G)|$ nodes and $m=|E(G)|$ edges, where $N_G(v)$ represents the neighborhood of a node $v$.\footnote{We assume wlog that $G$ does not contain isolated nodes, thus its size is $O(n+m) = O(m)$.} Triangles in $G$ are popular patterns studied in social networks to identify cohesive subgraphs. Recall that a triangle is a set of three pairwise-connected nodes $u,v,z$ (hence, egdes $\{u,v\}, \{v,z\}, \{z,u\}$ belong to that triangle). For each edge $e =  \{u,v\}$ in $G$, its \emph{support} $\supp_G(e) = |N_G(u) \cap N_G(v)|$ is the number of triangles to which $e$ belongs.

Among the cohesive subgraphs based on triangles, \ktrusses have quickly spread in security and social sciences, as they enforce the presence of many triangles~\cite{cohen2008trusses}. For an integer $k \geq 0$, we define the \emph{\ktruss} of $G$ as the maximal (edge-induced) subgraph $H$ of $G$ such that each edge $e$ of $H$ belongs to at least $k$ triangles of $H$.~\footnote{The original definition in~\cite{cohen2008trusses} is $2$-based, i.e., assumes that $k \geq 2$ and requires that $\supp_H(e) \geq k-2$, so a $k$-clique is a \ktruss. However, it is more convenient algorithmically, as we will see, to adopt our equivalent $0$-based definition.} Specifically, $H = (V(H), E(H))$ where $E(H) \subseteq E(G)$, and $V(H) = \{ x \in V(G) \mid \{x,y\} \in E(H)\}$, such that $\supp_H(e) \geq k$ for every $e \in E(H)$. Visual examples are given in Figure~\ref{fig:truss-decomposition} (left), comparing \ktrusses with the known notions of $k$-cores (i.e., each node in the subgraph has degree at least $k$) and $k$-cliques (i.e., the $k$ nodes in the subgraph are pairwise connected).

Over the years, the notion of \ktruss has become popular in community detection, and is gaining momentum for purposes other than security~\cite{cohen2008trusses,wang2012truss,huang2014querying,chen2014distributed,smith8091049shared,Voegele8091037,kabir2017parallel,Wu18ConsumerGrade,davis2018graph}, providing a remarkable benchmark (along with triangle counting and listing) to test new ideas, such as in the MIT/Amazon/IEEE GraphChallenge~\cite{kabir2017parallel,gchallenge,smith8091049shared,pearce2018k,date2017collaborative}. It has also been considered under different names, such as $k$-dense  subgraph~\cite{saito2008extracting}, triangle $k$-core~\cite{zhang2012extracting}, $k$-community~\cite{verma2013network}, and $k$-brace~\cite{ugander2012structural}.

Many existing highly-engineered solutions~\cite{smith8091049shared,kabir2017parallel,gchallenge} are sophisticated implementations of the same basic algorithmic idea presented in~\cite{cohen2008trusses}, and sometimes referred to as \emph{peeling}: the \ktruss of $G$ is obtained by recursively deleting edges with support smaller than $k$ in the residual graph.\footnote{In a similar fashion to $k$-cores, obtained by recursively removing nodes with less than $k$ neighbors.} As a matter of fact, these algorithms compute a \emph{truss decomposition}, namely, for each edge $e$ in $G$, they find its trussness $t_G(e)$ which is the largest $k$ such that a $k$-truss contains $e$. 

\subparagraph*{Problem studied}
In this paper we study the problem of approximating the \emph{trussness} $t_G$ of $G$, defined as the maximum $k$ such that there exists a \ktruss in $G$. Note that $G$ is \emph{triangle-free} if and only if $t_G=0$. As $t_G = \max_{e \in E(G)} t_G(e)$, we can clearly compute exactly the trussness of $G$ from its truss decomposition. However, we observe in Appendix~\ref{app:lowerb} that 
(1)~not only the truss decomposition, and thus $t_G$, can be computed in $O(m \, \alpha_G)$ time and $O(m)$ space, where the \emph{arboricity} $\alpha_G$ is the minimum number of forests into which the edges of $G$ can be partitioned,\footnote{A bound different but equivalent to $O(m \, \alpha_G)$ for the truss decomposition is also shown in~\cite{burkhardt2018bounds}.} 
but that 
(2)~a conditional lower bound exists, suggesting that $O(m \, \alpha_G)$ time is difficult to improve for an exact computation of the value of $t_G$. 

Because of~(1) and~(2), it is an interesting algorithmic question to see if the trussness $t_G$ can be approximated in less than $O(m \, \alpha_G)$ time without resorting to the exact truss decomposition.

We say that $\tilde{t}_G$ is an $r$-approximation for $t_G$, where $r>1$ is a constant, if the relation 
$\frac{t_G}{r} \leq \tilde{t}_G \leq r \, t_G$
holds. For the special and interesting case $r = 1 + \epsilon$ with $0 < \epsilon < 1$, this implies the condition $(1 - \epsilon) \, t_G \leq \tilde{t}_G \leq (1 + \epsilon) \, t_G$, and we call $\tilde{t}_G$  a $(1 \pm \epsilon$)-\emph{approximation}.\footnote{In turn this implies that $(1 - \epsilon) \, (t_G+2) \leq \tilde{t}_G + 2 \leq (1 + \epsilon) \, (t_G+2)$. As the trussness originally defined in~\cite{cohen2008trusses} is $t_G^* = t_G+2$, we observe that $\tilde{t}_G + 2$ is a $(1 \pm \epsilon$)-approximation for $t_G^*$
as well.} Given graph $G$, we want to compute a $(1 \pm \epsilon$)-approximation $\tilde{t}_G$ faster than computing $t_G$.

\subparagraph*{Results} 
Let $T_G$ be the number of triangles in $G$. We show that we can improve the $O(m \, \alpha_G)$-time bound when $T_G = \omega(m \polylog(n))$ (for all $\polylog(n)$). 
This follows from our general result that, for any $0<\epsilon<1$, a $(1 \pm \epsilon)$-approximation of the trussness $t_G$ can be obtained, with high probability, in expected time $O\left(\epsilon^{-3}\min\left\{\frac{m \log m}{T_G+1}, 1\right\} m \, \alpha_G \, \log (t_G+2)\right)$\footnote{We use $T_G+1$ and $t_G+2$ in place of $T_G$ and $t_G$ simply to preserve coherence of the bound when $T_G=0$ or $t_G\le 1$.} and space $O(\epsilon^{-2}m \log m)$ (Theorem~\ref{thm:estimate}).

Looking at the above bound, we also observe that in case $T_G$ is smaller, e.g. $T_G= \Theta(m)$, our approximation algorithm substantially has the same time cost $O(m \, \alpha_G)$ as the exact computation of $t_G$. 
However, we prove a matching conditional lower bound for this case, stating that there is no combinatorial algorithm which approximates $t_G$ within a multiplicative factor or an additive term taking $\tilde{o}(m \, \alpha_G)$ time, unless unless there exists a truly subcubic ``combinatorial'' algorithm for Boolean Matrix Multiplication (BMM)~\cite{williams2010subcubic}.\footnote{A definition of ``combinatorial'' algorithm, and a more detailed discussion of this conditional lower bound, which also includes the problems of (i)~detecting if a graph is triangle-free and (ii)~listing up to $n^{3-\delta}$ triangles in a graph for constant $\delta > 0$, are given in Section~\ref{sub:lower-bound-T-approx}. We remark for the moment that, as a rule-of-the-thumb, an algorithm is typically combinatorial if it does not use fast matrix multiplication.}
Indeed we prove that such an algorithm would improve the complexity of BMM and triangle-freeness, breaking the well known cubic lower bound (see~\cite{williams2010subcubic}) as well as the 3-SUM conjecture~\cite{Kopelowitz:2016:3sum}.

Under the same conditions, i.e., the non-existence of a truly subcubic combinatorial algorithm for BMM, we also show that no combinatorial algorithm that provides an approximation by either a multiplicative factor or an additive term, may run in time $O(m (t_G+1))$.

If, on the other hand, we move away from the scope of combinatorial algorithms and use matrix multiplication, where $\omega$ denotes, as in~\cite{williams2010subcubic}, the smallest real number such that an $(n \times n)$-matrix multiplication 
can be computed in $O(n^\omega)$, we show that a $(2+\epsilon)$-approximation of the trussness $t_G$ can be computed in either $O(\epsilon^{-1}n^{\omega}\log{\frac mn})$ or $O(\epsilon^{-1}m^{1+\frac{\omega-1}{\omega+1}})$ time (Theorem~\ref{thm:approx-matrix-multiplication}).

The exact computations of $t_G$ and $T_G$ are thus related since they share the same upper and conditional lower bounds. However, for the approximate computation, the trussness exhibits a hybrid behavior: when 
$T_G$ is polylogarithmically close to the number of edges in $G$, or smaller, approximation is as hard as listing the triangles in $G$; when $T_G$ is sufficiently large, instead, it can be much faster, as is the case for triangle counting. In particular, for $T_G = \Theta(m)$ and constant $\epsilon$, finding a $(1  \pm \epsilon)$-approximation for the trussness with a combinatorial algorithm is provably harder (unless BMM is truly subcubic) than finding a $(1  \pm \epsilon)$-approximation for triangle counting: the former \emph{cannot} be done in significantly less  $O(m \, \alpha_G)$ time (Theorem~\ref{thm:conditional-bound-T-approx}), whereas the latter takes $O\left(\left(\frac{n}{T_G^{1/3}+1}+\min\{m,\frac{m^{3/2}}{T_G}+1\}\right)\cdot \poly(\log n,\frac{1}{\epsilon})\right) = O\left(n^{2/3}+ \sqrt{m} \polylog(n)\right) = \tilde{o}(m)$ expected time
using the result in~\cite{eden2017approximately}.
 
We also provide further satellite results on triangle enumeration, trussness and truss decomposition in the rest of the paper.

\subparagraph*{Related work}
The seminal paper by Cohen~\cite{cohen2008trusses} presents the first algorithm to compute a \ktruss and a variety of highly optimized implementations have tuned that algorithm: sequential for massive networks~\cite{wang2012truss} in $O(m^{3/2})$ time, improved in~\cite{rossi2014fast}, parallel~\cite{smith8091049shared,kabir2017parallel}, distributed~\cite{chen2014distributed}, using data-centric models~\cite{Voegele8091037}, or parallel matrix multiplication on GPUs~\cite{bisson8091034static,green2017quickly}. Some of these were awarded in the 2017 and 2018 editions of the GraphChallenge~\cite{gchallenge}.

Furthermore,~\cite{huang2014querying} considered querying $k$-trusses under edge deletion, and~\cite{akbas2017truss} proposed an index computable in $O(m^{3/2})$ time, and maintainable on dynamic graphs, that permits to retrieve the $k$-truss containing a vertex $v$ in output-sensitive time.

A notion of trussness has also been given for probabilistic graphs~\cite{huang2016truss}, and uncertain graphs~\cite{zou2017truss}.
The \ktruss notion has been also extended with additional constraint of co-location, i.e., requiring connectivity and bounded diameter within the \ktruss~\cite{chen2018maximum}. 

From the algorithmic perspective, the problem has received much less attention. Recently,~\cite{burkhardt2018bounds} studied novel combinatorial properties on \ktrusses and proposed an exact algorithm for computing the truss-decomposition of a graph running with time $O(m\bar{\delta}(G))$, where $\bar{\delta}(G)$ is the average degeneracy of $G$. Moreover, the same paper provided an exact randomized algorithm which takes a parameter $K_{max}$ as input, and with high probability computes the $k$-truss-decomposition for all $k\leq K_{max}$, using matrix multiplication.

In this paper, we study approximation algorithms and conditional lower bounds for computing trussness, using several ideas: we introduce gadgets of controlled size to suitably amplify the trussness while preserving other parameters; we study several properties of a related hypergraph which equivalently represents the triangles; we provide a simple technique for triangle sampling based on wedge sampling~\cite{seshadhri2013triadic,schank2004approximating,etemadi2016efficient}, to name a few.

\section{Approximating the Trussness}

We describe how to obtain a $(1 \pm \epsilon)$-approximation of the trussness $t_G$ \whp using a combinatorial algorithm. Let $T_G$ be the number of triangles in $G$, observing that $T_G = O(m \, \alpha_G)$. 

\begin{theorem}
The trussness of a graph $G$ can be $(1\pm\epsilon)$-approximated  \whp, for any $0<\epsilon<1$, in expected time $O\left(\epsilon^{-3}\min\left\{\frac{m \log m}{T_G+1}, 1\right\} m \, \alpha_G \, \log (t_G+2)\right)$ and space $O(\epsilon^{-2}m \log m)$.
\label{thm:estimate}
\end{theorem}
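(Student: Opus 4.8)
My plan is to reduce the estimation of $t_G$ to running the usual min-support peeling on a \emph{small, randomly sampled} set of triangles, and to show that rescaling the resulting ``sampled trussness'' recovers $t_G$ within $(1\pm\epsilon)$. The enabling ingredient is a structural bound $T_G\le m\,t_G$. To see it, run min-support peeling: whenever the edge of minimum residual support is removed, that support is at most $t_G$, for otherwise every surviving edge would have support $>t_G$ and the current subgraph would be a nonempty $(t_G{+}1)$-truss, contradicting the definition of $t_G$. Charging each triangle to the first of its three edges to be removed — at which instant its other two edges still survive, so it is counted in that edge's residual support — shows each edge is charged at most $t_G$ triangles, giving $T_G\le m\,t_G$. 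This inequality is exactly what makes a sample of $s=\Theta(\epsilon^{-2}m\log m)$ triangles \emph{simultaneously} small enough to meet the space bound and large enough for concentration, with no prior knowledge of $t_G$ or $T_G$.

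The conceptual core is to replace the dynamic peeling by the static identity $t_G=\min_{\pi}\max_{e}b_\pi(e)$, where $\pi$ ranges over edge-elimination orders and $b_\pi(e)$ counts the triangles through $e$ whose other two edges both come after $e$ in $\pi$; this is the triangle analogue of degeneracy, and the minimum is attained by the min-support peeling order. Sampling each triangle independently with probability $p=s/T_G$ and letting $\hat b_\pi(e)$ count only the sampled triangles in $b_\pi(e)$, the sampled trussness $\hat t=\min_\pi\max_e\hat b_\pi(e)$ is computed by the same peeling on the sampled triangles. The identity reduces both halves of the concentration to a union bound over the $m$ edges only. For the lower bound I fix the true $t_G$-truss $H$: in every order its first removed edge has all other edges of $H$ later, so $\hat b_\pi(e)$ dominates the sampled support of that edge \emph{within} $H$, which is $\ge(1-\epsilon)p\,t_G$ \whp. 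For the upper bound I evaluate $\hat b$ on the single \emph{fixed} true peeling order $\pi^\star$, where each $b_{\pi^\star}(e)\le t_G$ is deterministic, so a Chernoff bound gives $\hat b_{\pi^\star}(e)\le(1+\epsilon)p\,t_G$ and hence $\hat t\le(1+\epsilon)p\,t_G$. Because $p\,t_G=(s/T_G)\,t_G\ge s/m=\Theta(\epsilon^{-2}\log m)$ by the structural bound, the additive Chernoff slack for small-support edges is always dominated by $\epsilon\,p\,t_G$, so $\hat t\in[(1-\epsilon)p\,t_G,(1+\epsilon)p\,t_G]$ \whp and $\hat t/p$ is the desired estimate.

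Algorithmically I would draw the $s$ triangles by wedge sampling (sample a uniform length-two path and keep it when it closes); the exactly computable wedge count $\sum_v\binom{\deg(v)}{2}$ together with the observed acceptance ratio yields a $(1\pm\epsilon)$-estimate of $T_G$, hence of $p$, at no extra asymptotic cost, and we return $\hat t/p$. The running time is dominated by producing the samples: when $p<1$ I expect wedge sampling to cost expected $O(\epsilon^{-2}\frac{m\log m}{T_G}\,m\,\alpha_G)$ in total, while when $p\ge1$ — i.e.\ $T_G=O(\epsilon^{-2}m\log m)$ — we simply list all $T_G=O(m\,\alpha_G)$ triangles in $O(m\,\alpha_G)$ time; peeling the $O(\epsilon^{-2}m\log m)$ sampled triangles is lower order, and the sample is the $O(\epsilon^{-2}m\log m)$ space. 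This gives the $\epsilon^{-2}\min\{\frac{m\log m}{T_G+1},1\}\,m\,\alpha_G$ factor; the remaining $\epsilon^{-1}\log(t_G{+}2)$ I expect to come from a geometric search with ratio $1+\Theta(\epsilon)$ that calibrates the sample size to the unknown magnitude of $t_G$ and sharpens the final estimate.

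The main obstacle is precisely the two-sided concentration across the peeling: a direct approach would demand that sampled supports track true supports for \emph{every} candidate subgraph, i.e.\ a union bound over exponentially many subgraphs. The $\min_\pi\max_e b_\pi(e)$ characterization is what circumvents this, collapsing the lower bound to the single fixed truss $H$ and the upper bound to the single fixed order $\pi^\star$, each needing only an $m$-edge union bound. The remaining care is that $p$ is rescaled using an \emph{estimate} of $T_G$; since the estimate is itself $(1\pm\epsilon)$-accurate it composes with the $(1\pm\epsilon)$ guarantee, but I would verify that this composition, together with the search, keeps the total error within the target and does not inflate the dependence beyond $\epsilon^{-3}$.
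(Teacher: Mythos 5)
Your proposal is correct, but it reaches the theorem by a genuinely different route than the paper. The sampling layer is essentially identical in both: the paper also builds the triangle hypergraph \gtri (vertices = edges of $G$, hyperedges = triangles), samples its hyperedges via forward-wedge sampling with a doubling search for the retention probability $p=\Theta(\epsilon^{-2}m\log m/(T_G+1))$, and also relies on the structural fact $T_G/m\le t_G$ (your peeling/charging proof of $T_G\le m\,t_G$ is a self-contained substitute for the paper's citation of the Nash--Williams-type bound). The divergence is in how the sample is turned into a number. The paper only extracts a \emph{$(1+\epsilon)$-approximate truss order} from the sampled hypergraph (each edge's forward support bounded by $\max\{T_G/m,(1+\epsilon)\minsupp\}$), and since such an order does not directly reveal $t_G$, it deploys two gadgets --- the blow-up $\gmirr{6}$ to amplify trussness and absorb off-by-one losses, and disjoint spurious $(x+2)$-cliques whose edges act as calibrated markers in the order --- together with a geometric search over $x$; this search is the source of the $\epsilon^{-1}\log(t_G+2)$ factor in the stated bound. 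You instead exploit the identity $t_G=d_{\gtri}=\min_\pi\max_e b_\pi(e)$ and prove two-sided concentration of the \emph{sampled degeneracy} directly: the lower bound by union-bounding only over the edges of the fixed $t_G$-truss $H$ (whose first edge in any order retains all its $H$-triangles forward), the upper bound by evaluating on the single fixed optimal order $\pi^\star$ where every forward support is deterministically at most $t_G$, with the additive Chernoff slack absorbed because $p\,t_G\ge \Omega(\epsilon^{-2}\log m)$ follows from $T_G\le m\,t_G$. This collapses the would-be union bound over orders/subgraphs to two $m$-term union bounds, needs no gadgets and no outer search, and in fact yields the stronger bound $O(\epsilon^{-2}\min\{\frac{m\log m}{T_G+1},1\}\,m\,\alpha_G)$ --- your closing speculation that an $\epsilon^{-1}\log(t_G+2)$ calibration search is still needed is unnecessary, as is your worry about composing the error in estimating $T_G$ into $p$: with wedge sampling the per-triangle retention probability equals the (exactly known) wedge probability, so the rescaling $\hat t/p$ introduces no additional error.
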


In the rest of section, we give an algorithm which meets the requirements of Theorem~\ref{thm:estimate}, whose structure is divided into main blocks according to the following roadmap.

\begin{enumerate}
\item In Section~\ref{sec:tools} we firstly recall the concepts of degeneracy and truss order; in order to approximate the trussness, and introduce a notion of \emph{approximate} truss order.
\item In Section~\ref{sec:estimate-trussness}, we show how to link the truss order of $G$ to its trussness: we add gadgets to $G$, whose trussness is known, and deduce the trussness of the edges of $G$ by looking at their position in the truss order with respect to the edges of the gadget.
When the truss order is not exact, understanding the trussness of $G$ is more complex. Nonetheless, we show that a suitable approximate order can also be used to approximate $t_G$.

\item In Section~\ref{sec:computix-apx-truss-order}, we focus on efficiently computing an approximate truss order of $G$. In particular, in Section~\ref{sec:sampling} we introduce a transformation which maps the graph $G$ into a hypergraph $\gtri$, whose degeneracy turns out to be exactly the trussness of the original graph. We build a sample of $\gtri$, called $\gtrip$ obtained by sampling triangles uniformly at random from $G$ with probability $\Omega(p)$ for a suitable $p$. 
Finally, in Section~\ref{sec:degtrip} we show that \whp we can get an approximate degeneracy ordering of $\gtri$ using $\gtrip$ which also induces an approximate truss order. 
\end{enumerate}

\subsection{Degeneracy order and truss order}
\label{sec:tools}

Before getting into the details of the approximation algorithm, we need to introduce some concepts that will help us to achieve our goal.

The degeneracy of a graph $G$ is a well-known sparsity measure~\cite{DBLP:conf/icalp/ConteGMV16,EppsteinLS13,wasa2014efficient,Matula:1983}, and is indicated here as $d_G$. It is defined as the largest integer $k$ such that there exists a $k$-core in $G$. It known that $d_G \le 2\cdot \alpha_G$,\footnote{This is proven by the fact that $G$ can be partitioned in $\alpha_G$ forests, so it has $\le \alpha_G (n-1)$ edges and an average degree $\le 2 \alpha_G$. As any subgraph of $G$ has at most the same arboricity, any subgraph has a vertex of degree at most $2\alpha_G$, so $d_G\le 2\alpha_G$.} and thus $d_G = O(\sqrt{m})$. We will use the \emph{degeneracy order} of the nodes in a graph, which can be defined as the order obtained by repeatedly removing the node of minimum (residual) degree from $G$. It takes $O(m)$ time to obtain a degeneracy ordering, and the $d_G$ is equal to the maximum among the residual degrees (called \emph{forward} degrees in the following).

We introduce similar notions for \ktrusses. 
A \emph{truss order} of $G$ is an ordering $\langle e_1,\ldots,e_m\rangle$ of its edges obtained by repeatedly removing the edge of smallest (residual) support in $G$, breaking ties arbitrarily. For an edge $e_i$ in the order, we call \emph{forward triangles} the triangles that $e_i$ forms using two other edges chosen from the residual graph (i.e., induced by $e_i,\ldots, e_m$), and \emph{forward support} their number $t_G(e_i)$. We observe that $t_G = \max_{i=1}^m t_G(e_i)$ by definition. In the truss order, the edges of $E(G)$ are numbered consistently with their trussness, i.e, $t_G(e_i) < t_G(e_j)$ implies $i < j$.

In the following, for a given order of the edges $\langle e_1,\ldots,e_m\rangle$, let $G_{\geq e_i}$ be the graph induced by the edges $e_i, \ldots, e_m$,\footnote{Whenever needed, we define $G_{ \geq v_j}$ as the graph induced by nodes $v_j, \ldots, v_n$ in the order $\langle v_1,\ldots,v_n\rangle$.} and $\minsupp(G_{\ge e_i}) = \min_{j \geq i} \supp_{G_{\ge e_i}}(e_j)$ be the minimum support of an edge in $G_{\ge e_i}$.
Note that a truss order guarantees that $t_G(e_i) = \supp_{G_{\ge e_i}}(e_i) = \minsupp(G_{\ge e_i}) \leq t_G$.

We introduce a definition of approximate truss order, inspired by the approximate degeneracy order introduced in~\cite{Farach-ColtonT14}, that can be computed faster than $t_G$ under certain conditions. 
For $\epsilon > 0$, we say that $\langle e_1,\ldots,e_m\rangle$ is a $(1+\epsilon)$-\emph{approximate truss order} of $G$ if, for every edge $e_i$, the number of forward triangles to which it belongs is upper bounded as
\begin{equation}
\label{eq:approx-order}
 \supp_{G_{\geq e_i}}(e_i) \leq \max\left\{ \frac{T_G}{m}, \: (1+\epsilon) \,\minsupp(G_{\ge e_i})\right\}   
\end{equation}
where $T_G$ is the number of triangles in $G$.
At this point it is relevant to mention how $T_G$ relates to $t_G$ by an extension of Nash-Williams' result~\cite{Nash-Williams} to trussness, proven in~\cite{conte2020truly}:

\begin{theorem}[(From~\cite{conte2020truly})]\label{thm:nashwill}
Given an undirected graph $G$ with trussness $t_G$, 
let $T_S$ be the number of triangles and $m_S$ be the number of edges in any subgraph $S$ of $G$. \break Then 
$\max_{S\subseteq G}\frac{T_S}{m_S} \leq t_G \leq 3\max_{S\subseteq G}\frac{T_S}{m_S}$.
\end{theorem}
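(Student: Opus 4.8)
The plan is to prove the two inequalities separately, writing $\rho = \max_{S\subseteq G}\frac{T_S}{m_S}$ for the maximum triangle density over subgraphs. The one elementary identity I would use throughout is that summing the support over all edges of any subgraph $S$ counts each triangle exactly three times, i.e.\ $\sum_{e\in E(S)}\supp_S(e) = 3\,T_S$.

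First I would dispatch the easy direction $t_G \le 3\rho$ by double counting. Let $H$ be the $t_G$-truss of $G$, that is, a nonempty subgraph in which every edge has support at least $t_G$ within $H$. Summing this per-edge bound over the $m_H$ edges and invoking the identity gives $3\,T_H = \sum_{e\in E(H)}\supp_H(e) \ge t_G\, m_H$, hence $\frac{T_H}{m_H}\ge \frac{t_G}{3}$. Since $H\subseteq G$, we get $\rho \ge \frac{T_H}{m_H}\ge \frac{t_G}{3}$, i.e.\ $t_G \le 3\rho$.

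For the lower bound $\rho \le t_G$ I would argue by local optimality of the densest subgraph, mirroring the classical maximum-density-subgraph argument behind Nash--Williams. Let $S^\ast$ attain $\rho = \frac{T_{S^\ast}}{m_{S^\ast}}$; if $\rho = 0$ the graph is triangle-free and $t_G = 0$, so I may assume $\rho > 0$, which already forces $S^\ast$ to contain a triangle and hence $m_{S^\ast}\ge 3$. I claim every edge $e$ of $S^\ast$ satisfies $\supp_{S^\ast}(e)\ge \rho$. Indeed, if some edge had $\supp_{S^\ast}(e) = s < \rho$, then deleting it would leave a subgraph of density $\frac{T_{S^\ast}-s}{m_{S^\ast}-1}$, and a one-line cross-multiplication shows this quantity strictly exceeds $\rho$ precisely when $s < \rho$, contradicting the maximality of $S^\ast$. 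Thus $S^\ast$ is a subgraph whose minimum support is at least $\rho$, and since supports are integers, at least $\lceil\rho\rceil$. The mere existence of such a subgraph certifies that the $\lceil\rho\rceil$-truss is nonempty, so $t_G \ge \lceil\rho\rceil \ge \rho$, which completes the proof.

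Combining the two bounds gives $\rho \le t_G \le 3\rho$, as required. I expect no real obstacle: the upper bound is pure triangle--edge double counting, and the only delicate points in the lower bound are ruling out the degenerate cases ($\rho=0$, or a single-edge maximizer) and justifying that local optimality of $S^\ast$ propagates to a \emph{uniform} per-edge support bound rather than merely an average one. The factor $3$ is genuinely tight and not an artifact of the analysis: a single triangle already has $\rho = \tfrac13$ while $t_G = 1$, so $t_G = 3\rho$ there, the slack being exactly the spreading of each triangle over its three edges.
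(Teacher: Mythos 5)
Your proof is correct. Note that this paper does not actually prove Theorem~\ref{thm:nashwill}: it is imported by citation from an external reference, so there is no in-paper argument to compare against. Your two directions are the natural ones and both check out: the upper bound $t_G \le 3\rho$ is exactly the double count $\sum_{e\in E(H)}\supp_H(e)=3T_H\ge t_G\,m_H$ over the (nonempty) $t_G$-truss $H$, and the lower bound $\rho\le t_G$ via local optimality of the densest subgraph $S^\ast$ is the standard Nash--Williams-style argument (deleting an edge of support $s<\rho$ strictly increases the ratio since $\frac{T-s}{m-1}>\frac{T}{m}\iff s<\frac{T}{m}$), which yields a \emph{uniform} per-edge support bound and hence a nonempty $\lceil\rho\rceil$-truss. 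You also correctly handle the degenerate cases ($\rho=0$ forces $t_G=0$; $\rho>0$ forces $m_{S^\ast}\ge 3$ so the deletion step is well defined), and your single-triangle example confirms the factor $3$ is tight.
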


Indeed, this result implies $t_G \geq T_G/m$, meaning that in formula~\eqref{eq:approx-order} each edge in the $(1+\epsilon)$-approximate truss order has at most $t_G\cdot (1+\epsilon)$ forward triangles. 

\subparagraph*{Gadget \boldmath$\gmirr{q}$}
Given a graph $G$ we use the gadget $\gmirr{q} = (V(\gmirr{q}),E(\gmirr{q}))$, known as \emph{balanced blow-up}~\cite{HATAMI2014196} and defined as follows: Let $G$ be called $G^1$ and, for each $i=2,\ldots, q$, let $G^i = (V(G^i), E(G^i))$ be an exact copy of $G$. For each $v\in V(G)$, we call $v^i$ the corresponding node in $V(G^i)$. Thus, $V(\gmirr{q}) = V(G^1) \cup \cdots \cup V(G^q)$ contains $nq$ nodes.
For each edge $\{u,v\}\in E(G)$, connect all pairs of nodes $u^i,v^j$ with $i,j\in\{1,\ldots, q\}$. Thus $E(\gmirr{q}) = E(G^1) \cup \cdots \cup E(G^q) \cup \{\{u^i,v^j\} \mid \{u,v\} \in E(G) \mbox{ and } i \neq j\}$ contains $O(mq^2)$ edges.

The main purpose of $\gmirr{G}$ is amplifying the trussness of $G$ by a factor $q$, introducing a limited amount of new nodes, edges and triangles. More formally, we can prove that:

\begin{lemma}
\label{lem:amplify-trussnes}
Given an undirected graph $G$ with $n$ nodes, $m$ edges, trussness $t_G$, and an integer $q>1$, we can obtain a graph $\gmirr{q}$ with $|V(\gmirr{q})| = \Theta(nq)$ vertices, $|E(\gmirr{q})| = \Theta(mq^2)$ edges, and trussness $t_{\gmirr{q}} = q \, t_G$ (and $q^3\, T_G$ triangles). Given access to $G$, it is possible to implicitly navigate in $\gmirr{q}$ without materializing it. 
\end{lemma}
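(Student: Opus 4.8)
The plan is to read off the size bounds directly from the construction and then establish $t_{\gmirr{q}}=q\,t_G$ via two matching inequalities. The workhorse for both directions is a single observation about how support scales under the blow-up: for any copy $\{u^i,v^j\}$ of an original edge $\{u,v\}\in E(G)$, a common neighbour in $\gmirr{q}$ must have the form $w^c$ with $w\in N_G(u)\cap N_G(v)$ and $c\in\{1,\dots,q\}$ arbitrary, so that $\supp_{\gmirr{q}}(\{u^i,v^j\})=q\,\supp_G(\{u,v\})$; the same computation relativised to any sub-blow-up shows support is multiplied by exactly $q$. The vertex and edge counts ($\Theta(nq)$ and $\Theta(mq^2)$) are immediate, and the triangle count follows by inspection: a triangle of $\gmirr{q}$ is $\{x^a,y^b,z^c\}$ where $\{x,y,z\}$ is a triangle of $G$ (distinctness of $x,y,z$ is forced since $G$ is simple and self-loop-free) and $a,b,c$ are free, giving exactly $q^3\,T_G$ triangles.

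For the lower bound $t_{\gmirr{q}}\ge q\,t_G$, I would take a subgraph $H\subseteq G$ realising the trussness, i.e. with $\minsupp(H)=t_G$, and consider its full blow-up $H\mirr{q}\subseteq\gmirr{q}$ (keeping all $q^2$ copies of each edge of $H$). By the scaling observation applied inside $H\mirr{q}$, every edge there has support $q\,\supp_H(\cdot)\ge q\,t_G$, so $H\mirr{q}$ is a $(q\,t_G)$-truss and hence $t_{\gmirr{q}}\ge q\,t_G$.

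The more delicate direction is the upper bound $t_{\gmirr{q}}\le q\,t_G$, which I expect to be the main obstacle, since an arbitrary dense subgraph $K\subseteq\gmirr{q}$ need not be a blow-up and may mix copies in an uncontrolled way. Recalling that trussness equals $\max_{S}\minsupp(S)$ over all subgraphs $S$, it suffices to exhibit, in every non-empty $K\subseteq\gmirr{q}$, an edge of support at most $q\,t_G$. The plan is to project: let $\pi(K)$ be the subgraph of $G$ that keeps an edge $\{u,v\}$ whenever at least one of its copies lies in $K$. Since $\pi(K)$ is a subgraph of $G$, we have $\minsupp(\pi(K))\le t_G$, so some $\{u,v\}\in E(\pi(K))$ satisfies $\supp_{\pi(K)}(\{u,v\})\le t_G$; pick any copy $\{u^i,v^j\}\in E(K)$ of it. Every common neighbour $w^c$ of $u^i,v^j$ inside $K$ forces $w$ to be a common neighbour of $u,v$ in $\pi(K)$, so $\supp_K(\{u^i,v^j\})\le q\,\supp_{\pi(K)}(\{u,v\})\le q\,t_G$. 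Thus $\minsupp(K)\le q\,t_G$ for every $K$, giving $t_{\gmirr{q}}\le q\,t_G$ and, with the previous paragraph, equality.

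Finally, for implicit navigation I would note that a vertex of $\gmirr{q}$ is a pair $(v,i)$ with $v\in V(G)$ and $i\in\{1,\dots,q\}$, that $\{u^i,v^j\}$ is an edge iff $\{u,v\}\in E(G)$ independently of $i,j$, and hence that $N_{\gmirr{q}}(u^i)=\{\,v^j : v\in N_G(u),\ j\in\{1,\dots,q\}\,\}$. All adjacency and neighbourhood queries can therefore be answered on the fly from $G$'s adjacency lists, so $\gmirr{q}$ never needs to be materialised.
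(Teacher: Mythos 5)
Your proof is correct, and the size/triangle counts, the support-scaling observation, the lower bound via the full blow-up $H\mirr{q}$ of a trussness-realising subgraph, and the implicit-navigation remarks all coincide with the paper's argument. Where you genuinely diverge is the upper bound $t_{\gmirr{q}}\le q\,t_G$. The paper argues by contradiction: it assumes a $k$-truss $J$ in $\gmirr{q}$ with $k>q\,t_G$, projects it to a subgraph $H$ of $G$, re-blows $H$ up to $J^*=H\mirr{q}$, argues via a support-transfer claim that $J^*$ is still a $k$-truss, and then invokes the scaling property to conclude that $H$ would witness trussness $>t_G$ in $G$. You instead use the characterisation $t_G=\max_{S\subseteq G}\minsupp(S)$ directly: for an arbitrary nonempty subgraph $K$ of $\gmirr{q}$ you project to $\pi(K)\subseteq G$, pick an edge of $\pi(K)$ of support at most $t_G$, and observe that any copy of it in $K$ has support at most $q$ times that, since every common neighbour $w^c$ in $K$ projects to a common neighbour $w$ in $\pi(K)$ with at most $q$ choices of $c$. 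This buys you something: you never need the re-blow-up step, nor the somewhat delicate claim that each edge of $J^*$ inherits at least the support of \emph{some} edge of $J$ (the weakest point of the paper's version); the price is that you must explicitly justify the $\max_S\minsupp(S)$ characterisation of trussness, which follows easily from maximality of the $k$-truss but should be stated. One small point worth making explicit in your projection bound is that $\supp_K(\{u^i,v^j\})\le q\,\supp_{\pi(K)}(\{u,v\})$ is only an inequality (not all $q$ copies of $w$ need appear as common neighbours in $K$), which is all you need and is exactly what you wrote.
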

\begin{proof}
Let us recall the definition of $\gmirr{q} = (V(\gmirr{q}),E(\gmirr{q}))$, given $G$: Let $G$ be called $G^1$ and, for each $i=2,\ldots, q$, let $G^i = (V(G^i), E(G^i))$ be an exact copy of $G$. For each $v\in V(G)$, we call $v^i$ the corresponding node in $V(G^i)$. Thus, $V(\gmirr{q}) = V(G^1) \cup \cdots \cup V(G^q)$ contains $nq$ nodes.
For each edge $\{u,v\}\in E(G)$, connect all pairs of nodes $u^i,v^j$ with $i,j\in\{1,\ldots, q\}$. Thus $E(\gmirr{q}) = E(G^1) \cup \cdots \cup E(G^q) \cup \{\{u^i,v^j\} \mid \{u,v\} \in E(G) \mbox{ and } i \neq j\}$ contains $O(mq^2)$ edges. For any $i,j$, we call $\{u^i,v^j\}$ the mirror edges of $\{u,v\}$.

We now show that $\gmirr{q}$ has trussness $t_\gmirr{q} = q \, t_G$. Consider any triplet of distinct nodes $a^h,b^i,c^j\in \gmirr{q}$, where $h,i,j \in \{1,\ldots,q\}$. They form a triangle in $\gmirr{q}$ if and only if $a,b,c$ form a triangle in $G$, so we have $q$ choices per edge, and thus the number of triangles is increased by a factor of $q^3$. 

Furthermore, consider any edge $\{a,b\} \in E(G)$, recalling that it belongs to $\supp_G(\{a,b\})$ triangles in $G$, and let $abc$ one such triangle. Looking at edge $\{a^h,b^i\} \in E(\gmirr{q})$, we observe that it belongs to the triangles $a^hb^ic^1, \ldots, a^hb^ic^q$. In other words, $\supp_\gmirr{q}(\{a^i,b^j\}) = q \, \supp_G(\{a,b\})$. Note this property is preserved for any subgraph $H$ of $G$, that is, $\supp_{H\mirr{q}}(\{a^i,b^j\}) = q \, \supp_H(\{a,b\})$, where $H\mirr{q}$ is the subgraph of $\gmirr{q}$ made up of the \emph{mirror} edges of those in $H$. 

Since $G$ has trussness $t_G$, let $H$ be a \ktruss in $G$ with $k=t_G$. Since each edge $\{a,b\} \in E(H) \subseteq E(G)$ has support $\supp_H(\{a,b\}) \geq t_G$, each mirror edge $\{a^h,b^i\}$ has support $\supp_{H\mirr{q}}(\{a^i,b^j\}) \geq q \, t_G$, and thus $H\mirr{q}$ is a \ktruss in $\gmirr{q}$ with $k \geq q \, t_G$. Note that the edge(s) of smallest support in $H$ (i.e., with support $t_G$) will have support exactly $q \, t_G$ in $H\mirr{q}$ and thus its trussness is exactly $q \, t_G$.

On the other hand, suppose by contradiction that $\gmirr{q}$ has a \ktruss $J$, where $k > q \, t_G$: thus the edge of smallest support in $J$ has support $> q \, t_G$ in $J$. Define $J^* = H\mirr{q}$, where $E(H) = \{ \{a,b\} \mid \{a^h,b^i\} \in E(J)\} \subseteq E(G)$. For each triangle $a^h b^i x$ in $J$, there is now at least one triangle $a^{h'} b^{i'} x$ in $J^*$ by definition of $J^*$. This means that each edge $\{a^{h'}, b^{i'}\}$ in $J^*$ has support at least equal to the support of some edge $\{a^h, b^i\}$ in $J$ by construction, thus $J^*$ is also a \ktruss $J$, where $k > q \, t_G$. This is a contradiction as it implies that $H$ is a \ktruss where $k > t_G$, and thus $G$ has trussness $> t_G$.

It is worth noting that we can navigate in $\gmirr{q}$ without materializing it. We discuss five navigation operations needed in this paper, but it can be extended to more.
\begin{itemize}
\item Iteration through all the nodes or edges: $O(nq)$ and $O(mq^2)$ time.
\item Adjacency check: same cost as in $G$, as $a^h$ and $b^i$ are neighbors iff $a$ and $b$ are neighbors in $G$.
\item Adjacency list: $O(q\delta(a^h))$ time for any node $a^h$, as its neighbors can be iterated as there are $q$ copies of the neighborhood of $a$ in $G$.
\item The $i$-th neighbor: same cost as in $G$, as there are $q$ copies of the neighborhood.
\item Node degree: same cost as in $G$, as there are $q$ copies of the neighborhood.
\end{itemize}
Our algorithms using the above operations can run on $\gmirr{q}$ without computing it, if given access to $G$.
\end{proof}

\newcommand{\gin}{G_{\mathit{in}}\xspace}

\subparagraph*{Gadget \boldmath$G(x)$}
We introduce another gadget, whose size and trussness are suitably bounded.

\begin{lemma}\label{lem:spurious-cliques}
Given a graph $G$ with $n$ nodes and $m$ edges, and an integer $x \geq 0$ with
$x = O(\sqrt{m})$, let $G(x)$ be the graph obtained from $G$ by adding
$\left\lceil m / \binom{x+2}{2}\right\rceil$ disjoint $(x+2)$-cliques, called \emph{spurious cliques} (whose edges are also called spurious).
Then $|V(G(x))| = O(m)$, $|E(G(x))| = O(m)$, and $\frac{T_{G(x)}}{|E(G(x))|}
\le \min\{\frac{t_G}{2}+\frac{x}{3}, \max\{t_G, \frac{x}{3}\}\}$.
\end{lemma}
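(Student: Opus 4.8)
The plan is to express the three quantities $|V(G(x))|$, $|E(G(x))|$, and $T_{G(x)}$ exactly in terms of the number $c=\lceil m/\binom{x+2}{2}\rceil$ of spurious cliques, and then reduce the density inequality to two elementary facts. Since the spurious cliques are disjoint connected components attached to $G$, no triangle spans two of them, so writing $B=\binom{x+2}{2}$ for the edges and $\binom{x+2}{3}$ for the triangles of one $(x+2)$-clique, I get the exact counts $|E(G(x))|=m+cB$, $|V(G(x))|=n+c(x+2)$, and $T_{G(x)}=T_G+c\binom{x+2}{3}$. The single arithmetic identity that drives the whole argument is that each spurious clique has triangle-to-edge ratio $\binom{x+2}{3}/\binom{x+2}{2}=x/3$; I will abbreviate the total spurious edge count as $S_E=cB$ and the total spurious triangle count as $S_T=c\binom{x+2}{3}$, so that $S_T/S_E=x/3$ identically.

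For the size claims I will use $x=O(\sqrt m)$, which gives $B=\binom{x+2}{2}=O(x^2)=O(m)$. Then $cB=\lceil m/B\rceil B<m+B=O(m)$ yields $|E(G(x))|=O(m)$, while $c(x+2)\le (m/B+1)(x+2)=\frac{2m}{x+1}+(x+2)=O(m)$, combined with $n=O(m)$ (no isolated nodes), yields $|V(G(x))|=O(m)$.

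The density bound is a minimum of two terms, so I will bound $\frac{T_{G(x)}}{|E(G(x))|}$ by each of them separately. The two facts I will lean on are: (i) $S_E=cB\ge m$, because the ceiling forces $c\ge m/B$; and (ii) $T_G/m\le t_G$, the consequence of Theorem~\ref{thm:nashwill} already recorded in the text. With these, the bound $\frac{T_{G(x)}}{|E(G(x))|}\le\max\{t_G,\,x/3\}$ is immediate from the mediant inequality: with nonnegative numerators and positive denominators, $\frac{T_G+S_T}{m+S_E}$ lies between $\frac{T_G}{m}$ and $\frac{S_T}{S_E}=\frac{x}{3}$, hence is at most $\max\{T_G/m,\,x/3\}\le\max\{t_G,\,x/3\}$ by fact (ii).

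The remaining step, and the only one needing care, is the bound $\frac{T_{G(x)}}{|E(G(x))|}\le\frac{t_G}{2}+\frac{x}{3}$. Writing $S_E=\alpha m$ with $\alpha\ge 1$ by fact (i) and $\tau=T_G/m\le t_G$ by fact (ii), I have $\frac{T_G+S_T}{m+S_E}=\frac{\tau+(x/3)\alpha}{1+\alpha}\le\frac{t_G+(x/3)\alpha}{1+\alpha}$; clearing denominators, the target $\frac{t_G+(x/3)\alpha}{1+\alpha}\le\frac{t_G}{2}+\frac{x}{3}$ simplifies to $\frac{t_G}{2}(1-\alpha)\le\frac{x}{3}$, which holds since $\alpha\ge 1$ makes the left side nonpositive and $x\ge 0$. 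This is precisely where the ceiling in the clique count pays off: without the guarantee $S_E\ge m$ (that is, $\alpha\ge 1$) the factor-$\tfrac12$ saving on $t_G$ would not appear. I therefore expect the main subtlety to be purely the bookkeeping that enough spurious cliques are added, rather than any deep argument; everything else is the mediant estimate and the identity $S_T/S_E=x/3$.
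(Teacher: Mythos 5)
Your proposal is correct and follows essentially the same route as the paper: exact counts of spurious edges and triangles, the identity $\binom{x+2}{3}/\binom{x+2}{2}=x/3$, the mediant inequality for the $\max\{t_G,x/3\}$ bound, and the observation that the ceiling forces the spurious edge count to be at least $m$ (the paper phrases this as $|E(G(x))|\ge 2m$, you as $\alpha\ge 1$), which is exactly what yields the factor $t_G/2$ in the other bound.
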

\begin{proof}
Let $\ell = \left\lceil m / \binom{x+2}{2}\right\rceil$. We observe that $G(x)$ has $T' = T_G + \ell \binom{x+2}{3}$ triangles and $m' = m + \ell \binom{x+2}{2}$ edges. Moreover, $m' \ge 2m$ and $m' \ge (2\ell-1)\binom{x+2}{2}$. Also note that $m' = \Theta(m)$, since $m \ge (\ell-1)\binom{x+2}{3}$ and $x = O(\sqrt{m})$.
As for $|V(G(x))|$, this is $n+ (x+2)\ell = O(m)$.
We have $\frac{T'}{m'} = \frac{T_G+ \ell\binom{x+2}{3}}{m+\ell\binom{x+2}{2}} \le \frac{T_G}{2m} + \frac{\ell\binom{x+2}{3}}{(2 \ell-1)\binom{x+2}{2}} = \frac{T_G}{2m} + \frac{\ell x}{3(2\ell-1)} \le \frac{t_G}{2}+\frac{x}{3}$ (as $\ell \ge 1$).
Moreover, as $\frac{a+c}{b+d} \le \max\left\{\frac ab,\frac cd\right\}$ when $a, b, c, d$ are all positive, and since $\frac{T_G}{m}\le t_G$, we also have that $\frac{T'}{m'} = \frac{T_G+ \ell\binom{x+2}{3}}{m+\ell\binom{x+2}{2}} \le \max\left\{\frac{T_G}{m}, \frac{\ell\binom{x+2}{3}}{\ell\binom{x+2}{2}}\right\} \le \max\{t_G, \frac{x}{3}\}$.
From this we have $\frac{T'}{m'} \le \min\{\frac{t_G}{2}+\frac{x}{3}, \max\{t_G, \frac{x}{3}\}\}$.
\end{proof}

\begin{algorithm}[t]
\caption{Approximating the graph trussness\label{alg:estimate}}
\small
\noindent\textbf{Input:} graph $\gin$ and $0<\epsilon<1$\\
\noindent\textbf{Output:} \whp $(1 \pm \epsilon)$-approximation $\tilde{t}_{\gin}$ of the trussness $t_{\gin}$ 
\hrule
\begin{enumerate}
    \itemsep0em 
    \item $G \gets $ disjoint union of $\gin \mirr{6}$ and a triangle\label{step:trasformation}  \quad 
    \item $\epsp \gets \frac{\epsilon}{6}$, $x\gets 1$, $\tilde{t}_G \gets 1$ \label{step:init}
    \item $\langle e_1, \ldots, e_{|E(G(x))|} \rangle \gets $ $(1+\epsp)$-approximate truss order of $G(x)$ \quad // \textit{\whp by Lemma~\ref{lem:cost-order}} \label{step:gx}
    \item if the first spurious edge $e_s$ appears \emph{before} the last edge $e_g$ of $G$, then \label{ln:condition}
    \vspace*{-.5em}
    \begin{enumerate}
        \setlength\itemsep{-.2em}
        \item $\tilde{t}_G \gets x$ \label{ln:updt}
        \item $x \gets \lceil (1+\epsilon)x \rceil$
        \item goto step~\ref{step:gx}
    \end{enumerate}
    \item if $\tilde{t}_G<2$ then return $0$  \quad // \textit{exact value of $t_{\gin}$} \label{step:return-zero}
    \item if a single $r \in \left[\tilde{t}_G \, (1+\epsp)^{-1}, \, (\tilde{t}_G+1)(1+3\epsp)\right]$ is a multiple of $6$, then return $\frac{r}{6}$  \quad // \textit{exact value}\label{step:return-exact}
    \item return $\frac{\tilde{t}_G}{6}$ \quad // $(1\pm \epsilon)$-\textit{approximation $\tilde{t}_{\gin}$} \label{step:return}
\end{enumerate}
\end{algorithm}

\subsection{Approximation algorithm}
\label{sec:estimate-trussness}
Suppose that we have a $(1+\epsilon)$-approximate truss order. We show that using the gadgets described in Section~\ref{sec:tools}, we obtain a $(1\pm\epsilon)$-approximation of $t_G$.
The basic idea is to expand the input graph using suitably $\gmirr{6}$ and then obtain $G(x)$; after that, the approximate order on $G(x)$ where $x$ increases each time by a factor $(1+\epsilon)$ is employed to spot the spurious cliques as ``markers'' to guess $t_G$. 
We give the pseudocode in Algorithm~\ref{alg:estimate}, whose rationale is explained in Lemma~\ref{lem:estimate}. The algorithm relies on the crucial test performed at line~\ref{ln:condition} that uses spurious edges as markers, according to what claimed in  Lemma~\ref{lem:gadgets}.

\begin{lemma}
    Given a graph $G$ and an integer $x\ge 0$, let $\langle e_1,\ldots, e_{|E(G(x))|}\rangle$ be a $(1+\epsilon)$-approximate truss order of $G(x)$.
    Let $e_g$ be the \emph{last} edge of $G$ appearing in the order, and $e_s$ the \emph{first} spurious edge appearing in the order. Then:
    \begin{enumerate}
        \item if $x<(1+\epsilon)^{-1}t_G$ then $e_g$ appears after $e_s$ 
        \item if $x>(1+\epsilon)\, t_G$ then $e_g$ appears before $e_s$ 
    \end{enumerate}
    \label{lem:gadgets}
\end{lemma}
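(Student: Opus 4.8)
The plan is to exploit the one-sided guarantee of the approximate truss order in~\eqref{eq:approx-order} together with the bounds on $T_{G(x)}/|E(G(x))|$ from Lemma~\ref{lem:spurious-cliques}, plus one elementary fact: each spurious edge lies in a vertex-disjoint $(x+2)$-clique, so it has support exactly $x$ in $G(x)$ and at most $x$ in any subgraph. Throughout I abbreviate $G(x)$ by $H$ and write $H_{\geq e_i}$ for its residual graphs. The two parts are then mirror images: in each I locate a single ``witness'' edge, use that its residual support is pinned at a known value, and push that value through~\eqref{eq:approx-order} to force $\minsupp$ of the residual graph above a threshold that no edge of the wrong type can meet.

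For part~1, assume $x<(1+\epsilon)^{-1}t_G$. I would fix a \ktruss $H^*$ of $G$ with $k=t_G$ (which exists by definition of $t_G$) and let $f$ be the first edge of $H^*$ deleted in the order. Since $f$ is the first such deletion, all of $H^*$ survives in $H_{\geq f}$, so $\supp_{H_{\geq f}}(f)\geq \supp_{H^*}(f)\geq t_G$. I then plug this into~\eqref{eq:approx-order}: as $x<t_G$, Lemma~\ref{lem:spurious-cliques} gives $T_H/|E(H)|\leq \frac{t_G}{2}+\frac{x}{3}<t_G\leq \supp_{H_{\geq f}}(f)$, so the maximum in~\eqref{eq:approx-order} must be attained by its second argument, yielding $\minsupp(H_{\geq f})\geq t_G/(1+\epsilon)>x$. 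Since every spurious edge carries support at most $x$ in any subgraph, none can survive in $H_{\geq f}$; hence $e_s$ precedes $f$, and as $f$ is itself an edge of $G$ it is not after $e_g$, so $e_s$ precedes $e_g$, as claimed.

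For part~2, assume $x>(1+\epsilon)\,t_G$. Here I would track the first spurious edge $e_s$ directly. Since no spurious edge is removed before it, its clique is intact in $H_{\geq e_s}$ and $\supp_{H_{\geq e_s}}(e_s)=x$. Feeding this into~\eqref{eq:approx-order} and using Lemma~\ref{lem:spurious-cliques}'s bound $T_H/|E(H)|\leq \max\{t_G,\frac{x}{3}\}<x$, the maximum is again forced onto the second argument, giving $\minsupp(H_{\geq e_s})\geq x/(1+\epsilon)>t_G$. It then remains to rule out any surviving edge of $G$ in $H_{\geq e_s}$: the crucial observation is that the spurious cliques are vertex-disjoint from $G$, so every triangle through a $G$-edge uses only $G$-vertices and $G$-edges; consequently the restriction $G'$ of $H_{\geq e_s}$ to the surviving $G$-edges satisfies $\supp_{G'}(e)=\supp_{H_{\geq e_s}}(e)>t_G$ for each of its edges. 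If $G'$ were nonempty it would be a \ktruss of $G$ with $k=t_G+1$, contradicting the trussness of $G$. Hence all $G$-edges precede $e_s$, so $e_g$ precedes $e_s$.

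The routine parts are the two arithmetic comparisons that collapse the $\max$ in~\eqref{eq:approx-order} onto $(1+\epsilon)\minsupp$. The conceptual crux, and the step I expect to need the most care, is the vertex-disjointness argument in part~2 showing that the surviving $G$-edges form a genuine $(t_G+1)$-\truss of $G$ — i.e.\ that the support of a $G$-edge never ``leaks'' into the spurious cliques. The symmetric subtlety in part~1, namely that spurious edges always carry support at most $x$ no matter what has already been deleted, is comparatively immediate once the disjoint-clique structure is in hand.
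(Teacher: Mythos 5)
Your proof is correct and follows essentially the same route as the paper's: both rest on the two ratio bounds of Lemma~\ref{lem:spurious-cliques}, the fact that an intact spurious clique pins the forward support of its edges at $x$, and the observation that vertex-disjointness prevents supports from leaking between $G$ and the spurious cliques (so a surviving nonempty set of $G$-edges with all residual supports above $t_G$ would contradict the definition of $t_G$). The only difference is presentational — you argue directly about the witness edges $f$ and $e_s$ and push \eqref{eq:approx-order} forward, whereas the paper argues the contrapositive for an arbitrary edge $e_i$ whose residual graph contains a spurious (resp.\ original) edge — but the inequalities manipulated are the same.
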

\begin{proof}
By~\eqref{eq:approx-order} for $G(x)$, each edge $e_i$ has $\supp_{G(x)_{\ge e_i}}(e_i) \le \max\{\frac{T'}{m'}, (1+\epsilon)\,\minsupp(G(x)_{\ge e_i}) \}$, where  $T'$ the number of triangles in $G(x)$ and $m'$ is the number of edges.
From Lemma~\ref{lem:spurious-cliques} we also have $\frac{T'}{m'} \le \min\{\frac{t_G}{2}+\frac{x}{3}, \max\{t_G, \frac{x}{3}\}\}$.

We now analyze the two cases in the statement.
\begin{enumerate}
\item Case $x<(1+\epsilon)^{-1}t_G$: First observe that 
\begin{equation}
\label{eq:apx-support1}
  \max\left\{ \frac{t_G}{2}+\frac{x}{3}, \, (1+\epsilon)\, x\right\} < t_G
\end{equation}
by applying $x<t_G$ to the first argument of $\max$, and $x(1+\epsilon) < t_G$ to the second argument. Consider now an arbitrary edge $e_i$, recalling that 
\begin{equation}
\label{eq:apx-support2}
  \supp_{G(x)_{\ge e_i}}(e_i) \le \max\left\{ \frac{t_G}{2}+\frac{x}{3}, \, (1+\epsilon)\,\minsupp(G(x)_{\ge e_i})\right\} 
\end{equation}
holds by definition of approximate truss order, where $\frac{T'}{m'} \le \frac{t_G}{2}+\frac{x}{3}$  from Lemma~\ref{lem:spurious-cliques}. If $G(x)_{\ge e_i}$ contains a spurious edge, as the latter has support at most $x$, we have $\minsupp(G(x)_{\ge e_i}) \leq x$ in~\eqref{eq:apx-support2} as each spurious clique is isolated in $G(x)$. Hence we can upper bound~\eqref{eq:apx-support2} using this and~\eqref{eq:apx-support1}, obtaining
\begin{equation}
\label{eq:apx-support3}
  \supp_{G(x)_{\ge e_i}}(e_i)_ \le \max\left\{ \frac{t_G}{2}+\frac{x}{3}, \, (1+\epsilon)\,x\right\}  < t_G
\end{equation}
In other words, $e_i$ cannot be the first edge in the order of a $t_G$-truss because its forward triangles would be at least $t_G$.
Consequently, all the spurious edges (including $e_s$) occur before the first edge of a $t_G$-truss, and thus before $e_g$.

\item Case $x > (1 + \epsilon) \, t_G$: 
To prove this case we use a similar analysis to the one above, observing that $x > \max\{ \max\{t_G,\frac{x}{3}\}, (1+\epsilon)\,t_G\}$, since $x > t_G$. 

We use the fact $\frac{T'}{m'} \le \max\{t_G,\frac{x}{3}\}$ from Lemma~\ref{lem:spurious-cliques}. Hence, in the approximate truss order 
\begin{equation}\label{eq:apx-support4}
\supp_{G(x)_{\ge e_i}}(e_i) \le \max\left\{ \max\left\{t_G,\frac{x}{3}\right\}, (1+\epsilon)\,\minsupp(G(x)_{\ge e_i}) \right\}
\end{equation}
If $G(x)_{\ge e_i}$ contains an edge from $G$, we have $\minsupp(G(x)_{\ge e_i})\le t_G$ (by definition of trussness as the minimum support for an edge in any subgraph of $G$ cannot be larger than $t_G$). Hence we can upper bound~\eqref{eq:apx-support4} as 
\begin{equation}\label{eq:apx-support5}
\supp_{G(x)_{\ge e_i}}(e_i) \le \max\left\{ \max\left\{t_G,\frac{x}{3}\right\}, (1+\epsilon)\,t_G \right\} < x
\end{equation}
It follows that $e_i$ cannot be the first spurious edge in the ordering, which would have $x$ forward triangles as each spurious clique in $G(x)$ is isolated. Thus $e_s$ must occur in the order after $e_g$, which proves the statement for this case. (To get a full picture, all the spurious edges are at the end of this order.)
\end{enumerate}
\end{proof}

\begin{lemma}
\label{lem:estimate}
Consider any undirected graph $G$ with $m$ edges, $T_G$ triangles, arboricity $\alpha_G$, and trussness $t_G$. 
For any $0<\epsilon < 1$, suppose that a $(1+\epsilon)$-approximate truss order can be found in expected $f(\epsilon, m, T_G, \alpha_G) = O\left(\epsilon^{-2} \min\left\{\frac{m \log m}{T_G+1}, 1\right\}\, m \, \alpha_G\right)$ time and with probability $\Omega(1 - m^{-2})$, using $s(\epsilon, m) = O(\epsilon^{-2}m \log m)$ space.
Then, Algorithm~\ref{alg:estimate} provides \whp a $(1\pm \epsilon)$-approximation for the trussness $t_G$ in expected $O\left(\epsilon^{-1} \log (t_G+2) \cdot f(\epsilon, m, T_G, \alpha_G) \right)$ time, using $O(m+ s(\epsilon,m))$ space.
\end{lemma}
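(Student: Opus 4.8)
Throughout, let $G$ denote the input graph of the lemma (the pseudocode's \gin), so that $t_G,m,T_G,\alpha_G$ are as in the statement, and let $\hat G$ denote the graph consisting of \gmirr{6} together with a disjoint triangle built in step~\ref{step:trasformation} (the pseudocode's $G$). The plan is to prove correctness first and then the resource bounds, viewing the algorithm as an exponential search that localizes the amplified trussness $t_{\hat G}$ and then decodes it back to $t_G$. I would begin by recording, via Lemma~\ref{lem:amplify-trussnes}, that $t_{\hat G}=\max\{6\,t_G,1\}$: the blow-up multiplies trussness by $6$, while the disjoint triangle (of trussness $1$) is exactly what forces $t_{\hat G}\ge 1$ when $G$ is triangle-free. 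I would also note $|E(\hat G)|=\Theta(m)$ and that $\hat G$ has $\Theta(T_G+1)$ triangles. The whole reduction rests on the fact that $t_{\hat G}$ is either $1$ or a known \emph{multiple of $6$}, so that localizing it well enough pins it down exactly in the small regime and approximates it in the large regime.

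Next I would prove correctness of the loop (steps~\ref{step:gx}--\ref{ln:condition}) under the hypothesis that every call returns a genuine $(1+\epsp)$-approximate truss order; since there are $O(\epsilon^{-1}\log(t_G+2))$ calls and each fails with probability $O(m^{-2})$, a union bound makes all of them correct \whp. Applying Lemma~\ref{lem:gadgets} to $\hat G(x)$ with precision $\epsp$ at each tested $x$, the marker condition of step~\ref{ln:condition} holds whenever $x<(1+\epsp)^{-1}t_{\hat G}$ and fails whenever $x>(1+\epsp)\,t_{\hat G}$. Hence the last value $\tilde t_G$ of $x$ for which the condition holds (recorded in step~\ref{ln:updt}) and the first value at which it fails sandwich $t_{\hat G}$: the former yields $\tilde t_G\le(1+\epsp)\,t_{\hat G}$, and the latter yields $t_{\hat G}\le(1+\epsp)\cdot(\text{first failing value})$. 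Combining these with the geometric growth of $x$ and the choice $\epsp=\epsilon/6$, I would derive the key interval containment $t_{\hat G}\in\left[\tilde t_G(1+\epsp)^{-1},\,(\tilde t_G+1)(1+3\epsp)\right]$, which is precisely the interval tested in step~\ref{step:return-exact}.

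With this containment in hand, the decoding (steps~\ref{step:return-zero}--\ref{step:return}) splits into three cases. If $t_G=0$ then $t_{\hat G}=1$, forcing $\tilde t_G\le(1+\epsp)<2$, so step~\ref{step:return-zero} returns $0=t_G$ exactly. If the interval contains a \emph{unique} multiple of $6$, then since $t_{\hat G}=6\,t_G$ is a multiple of $6$ lying in the interval, that unique multiple must equal $t_{\hat G}$, and step~\ref{step:return-exact} returns $t_{\hat G}/6=t_G$ exactly. Otherwise the interval contains at least two multiples of $6$; its length being $\ge 6$ forces $\tilde t_G=\Omega(1/\epsilon)$, large enough that the additive $+1$ from the ceilings becomes negligible against $\epsilon\,\tilde t_G$, so the multiplicative sandwich sharpens to $(1-\epsilon)\,t_{\hat G}\le\tilde t_G\le(1+\epsilon)\,t_{\hat G}$ and step~\ref{step:return} returns the $(1\pm\epsilon)$-approximation $\tilde t_G/6$ of $t_G=t_{\hat G}/6$.

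Finally, for the resource bounds I would count $O(\epsilon^{-1}\log(t_G+2))$ iterations, since $x$ grows geometrically from $1$ up to $\Theta(t_{\hat G})=\Theta(t_G)$. Each iteration runs the assumed order subroutine on $\hat G(x)$, and the crux is that $\hat G(x)$ has parameters comparable to $G$: $|E(\hat G(x))|=\Theta(m)$ by Lemma~\ref{lem:spurious-cliques}; the largest tested $x$ is $O(t_G)=O(\alpha_G)$ (a $t_G$-truss lies in the $(t_G{+}1)$-core, so $t_G<d_G\le 2\alpha_G$), which simultaneously ensures $x=O(\sqrt m)$ so that Lemma~\ref{lem:spurious-cliques} applies and keeps $\alpha_{\hat G(x)}=O(\alpha_G)$, because the spurious $(x+2)$-cliques have arboricity $O(x)=O(\alpha_G)$; and $\hat G(x)$ has at least $\Theta(T_G)$ triangles, so the factor $\min\{m\log m/(T+1),1\}$ can only shrink. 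Thus each call costs $O(f(\epsilon,m,T_G,\alpha_G))$ (absorbing $\epsp^{-2}=\Theta(\epsilon^{-2})$), giving total time $O(\epsilon^{-1}\log(t_G+2)\,f(\epsilon,m,T_G,\alpha_G))$, while the space is $O(m+s(\epsilon,m))$ since the subroutine's working space is reused across iterations and $\hat G(x)$ is navigated implicitly. I expect the main obstacle to be the constant bookkeeping of the decoding step: the raw exponential search localizes $t_{\hat G}$ only within a $(1+\Theta(\epsilon))$ factor, and one must verify that the specific choices $\epsp=\epsilon/6$, the geometric ratio, and the interval $\left[\tilde t_G(1+\epsp)^{-1},(\tilde t_G+1)(1+3\epsp)\right]$ together guarantee both that $t_{\hat G}$ provably lies in the tested interval and that the exact/approximate dichotomy triggered by the ``unique multiple of $6$'' test is clean.
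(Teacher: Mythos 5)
Your proposal is correct and follows essentially the same route as the paper's proof: the same gadget construction ($\gmirr{6}$ plus a disjoint triangle, then $G(x)$ with exponentially growing $x$), the same invocation of Lemma~\ref{lem:gadgets} to sandwich the amplified trussness in the interval $\left[\tilde{t}_G(1+\epsp)^{-1},(\tilde{t}_G+1)(1+3\epsp)\right]$, and the same exact-versus-approximate dichotomy resolved by the multiple-of-$6$ test (your split by ``unique multiple of $6$'' is the contrapositive of the paper's split by $t_G \gtrless \frac{1}{3\epsp}$, via the same interval-length bound). Your explicit verification that $\hat G(x)$ keeps $m$, $T_G$, and $\alpha_G$ within constant factors is a welcome detail the paper leaves implicit, but it does not change the argument.
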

\begin{proof}
We replace the input graph $\gin$ with $G$, the graph obtained by uniting $\gmirr{6}$ and a $K_3$ (line~\ref{step:trasformation}): in this way, we guarantee that $t_G \geq 1$ and that we can take $\epsp = \frac{\epsilon}{6}$. It is not difficult to see at this point that a $(1\pm \epsilon)$-approximation of the trussness of the original input graph follows by dividing by~$6$ (line~\ref{step:return}) the $(1\pm \epsp)$-approximation of the trussness of the new $G$, as $\gmirr{6}$ amplifies trussness by $6$ (Lemma~\ref{lem:amplify-trussnes}), unless $t_{\gin}$ is $0$, in which case the algorithm will output the correct result (that is clearly also a $(1\pm \epsilon)$-approximation).

The algorithm performs some initialization steps (Line~\ref{step:init}), 
then outputs the approximation of the trussness of $\gin$ based on the value of $\tilde{t}$, which is computed in Lines~\ref{step:gx}--\ref{ln:condition} and bounded as follows:

\begin{enumerate}

\item \label{point:1} We prove that $\tilde{t}_G \le (1+\epsp)t_G \le (1+\epsilon)t_G$.

As by assumption $t_G\ge 1$ (because of the added $K_3$), in the beginning $\tilde{t}_G = 1 < (1+\epsp)t_G $.
Furthermore, if in a step we have $x > (1+\epsp)t_G$, then by Lemma~\ref{lem:gadgets} the first spurious edge $e_s$ appears after the last edge $e_g$ of $G$, meaning that $\tilde{t}_G$ will not be updated, so whenever we update $\tilde{t}_G$ we maintain that $\tilde{t}_G \le (1+\epsp)t_G$.

\item \label{point:2} We prove that $\tilde{t}_G \ge (1+3\epsp)^{-1} t_G  -1 \ge (1-\epsilon) t_G -1$.

Note that $\tilde{t}_G$ is the largest value of $x$ for which, in the $(1+\epsp)$-approximate truss order of $G(x)$, the condition at Line~\ref{ln:condition} is true.

By Lemma~\ref{lem:gadgets}, if $x < (1+\epsp)^{-1} t_G$ then the condition at Line~\ref{ln:condition} is true. Hence, let $x_F$ be first value of $x$ for which the condition is false (i.e., $e_s$ appears after $e_g$), and  let $x_T$ be the value of $x$ soon before $x_F$. It must be $x_F \ge \frac{t_G}{1+\epsp}$ and $\tilde{t}_G = x_T$. By looking at step~\ref{ln:updt}, we have $x_F = \lceil x_T (1+\epsp)\rceil = \lceil \tilde{t}_G (1+\epsp)\rceil$.
Thus $\lceil \tilde{t}_G (1+\epsp)\rceil \ge \frac{t_G}{1+\epsp}$,
meaning that $\tilde{t}_G (1+\epsp) +1 \ge \frac{t_G}{1+\epsp}$, and $\tilde{t}_G \ge \frac{t_G}{(1+\epsp)^2} -\frac{1}{1+\epsp}\ge \frac{t_G}{(1+\epsp)^2} -1 \ge \frac{t_G}{(1+3\epsp)} -1 \ge t_G(1-3\epsp) -1 \ge t_G(1-\epsilon)-1$.\footnote{Note that $1+3\epsp > (1+\epsp)^2$ and $\frac{1}{1+3\epsp} > 1-3\epsp$}

\end{enumerate}

We observe that our $(1 \pm \epsilon)$-approximation still has an off-by-1 in point~\ref{point:2} above: we get $\tilde{t}_G \ge (1-\epsilon)t_G-1$ instead of 
$\tilde{t}_G \ge (1-\epsilon)t_G$. The latter can be obtained with some extra math, requiring also the check at line~\ref{step:return-exact}, as discussed in the two cases below.
\begin{itemize}
\item Case $t_G\ge \frac{1}{3\epsp}$:
Since $3\epsp{t_G} \ge 1$, we get $\tilde{t}_G \ge (1+3\epsp)^{-1} t_G - 1 \ge t_G(1-3\epsp) - 1 \ge t_G(1-3\epsp) - 3\epsp{t_G} = t_G(1-6\epsp)$, proving the statement as $\epsilon = 6\epsp$.

\item Case $t_G < \frac{1}{3\epsp}$:
Recall that $t_G \geq (1+\epsp)^{-1}\tilde{t}_G$ by point~\ref{point:1}
and that
$t_G \leq (\tilde{t}_G + 1)  (1+3\epsp)$ by point~\ref{point:2}.
We want to prove that $(\tilde{t}_G+1)(1+3\epsp)- (1+\epsp)^{-1}\tilde{t}_G \le 4$ holds, thus ensuring that the algorithm will output the correct value of $t_G$ as it is the only multiple of $6$ in the given range (check done at line~\ref{step:return-exact}).
We have
$(\tilde{t}_G+1)(1+3\epsp)-(1+\epsp)^{-1}\tilde{t}_G \le
\tilde{t}_G\left(1+3\epsp-\frac{1}{1+\epsp}\right) + 1 + 3\epsp \le
\tilde{t}_G\left(\frac{\epsp}{1+\epsp}+3\epsp\right) + 2 \le
t_G\left(\epsp+3\epsp(1+\epsp)\right) + 2 \le
t_G(4\epsp+3\epsp^2) + 2 \le 5\epsp t_G+2 \le 4$, where we used the bounds on $\epsp$ which imply $3\epsp^2 \le \epsp$ and the upper bound $t_{G}(1+\epsp)$ on $\tilde{t}_G$.
\end{itemize}

As for the correctness,  we compute the approximate truss order $O(\frac{\log (t_G+2)}{\log 1+\epsilon}) = O(\epsilon^{-1}\log (t_G+2))$ times. Our algorithm still succeeds \whp as the latter probability is $\Omega(1-m^{-2})$ and we can generously bound $\epsilon^{-1}\log (t_G+2)$ as $O(n\log n)$.

As for the complexity, at each iteration we pay $O( f(\epsilon, m, T_G, \alpha_G))$. As the number of iterations is $O(\epsilon^{-1}\log (t_G+2))$, and we use $O(m)$ extra space, the statement follows.
\end{proof}

In order to conclude the proof of Theorem~\ref{thm:estimate}, we need to show in Section~\ref{sec:computix-apx-truss-order} how to compute a $(1+\epsilon)$-approximate truss order in expected $f(\epsilon, m, T_G, \alpha_G) = O\left(\epsilon^{-2} \min\left\{\frac{m \log m}{T_G+1}, 1\right\}\, m \, \alpha_G\right)$ time and with probability $\Omega(1 - m^{-2})$, using $s(\epsilon, m) = O(\epsilon^{-2}m \log m)$ space.

\subsection{Computing a $(1+\epsilon)$-approximate truss order}
\label{sec:computix-apx-truss-order}
\label{sec:gtri}

Given a graph $G$, we denote as $\gtri$ the \textit{triangle hypergraph}. This hypergraph, originally introduced in~\cite{burkhardt2018bounds}, is a $3$-uniform hypergraph whose nodes are the edges of $G$, and whose hyperedges are the triplets of edges which form a triangle in $G$. More formally, $V(\gtri) = \{v_{e_i} : e_i $ is an edge of $G\}$, and for each triplet $e_1, e_2, e_3\in E(G)$ which forms a triangle, there is a hyperedge $(v_{e_1},v_{e_2},v_{e_3}) \in E(\gtri)$.
Since each pair of edges defines a unique triangle (if any), each pair of hyperedges in $\gtri$ can overlap by at most one vertex, and the degree of $v_{e_i}$ in $\gtri$ is equal to the support of $e_i$ in $G$. Thus, $\gtri$ has $|E(G)| = m$ vertices and $T_G$ hyperedges. Moreover, building $\gtri$ takes $O(md_G)$ time.

We will use $\gtri$ for computing the approximate truss order, since there is correspondence between the trussness of $G$ and the degeneracy of $\gtri$. For each node $u$ of a hypergraph $H$ its degree, indicated as $\degr_H(u)$, is the number of hyperedges in $H$ containing $u$. The degeneracy of the hypergraph $\gtri$, denoted as $d_{\gtri}$, is the natural extension of degeneracy for graphs: it is the maximum $d$ such that there is an induced subgraph $H'$ of $H$ in which, for each vertex $u$, $\degr_{H'}(u)\geq d$.

Given the correspondence between degree in $\gtri$ and support in $G$, as noted in~\cite{burkhardt2018bounds}, it is straightforward to observe that a $k$-core in $\gtri$ (a subgraph where all nodes have degree $k$ or more) corresponds to a $k$-truss in $G$ (a subgraph where all edges have support $k$ or more). Hence we obtain:

\begin{remark}
$d_{\gtri} = t_G$
\label{rem:dgtri}
\end{remark}

Our method for computing an approximate truss order consists in efficiently computing a sample of $\gtri$, as explained in Section~\ref{sec:sampling}, then using that sample to compute the approximate truss order, as shown in Section~\ref{sec:degtrip}.

\subsubsection{Sampling}
\label{sec:sampling}

In the following, we show how to sample a subhypergraph $\gtrip$ of $\gtri$ such that each hyperedge of $\gtri$ is in $\gtrip$ independently with probability $p \ge \zeta\frac{m\log m}{(T_G+1)\epsilon^{2}}$ 
, where $\zeta$ is a suitable constant such that Lemma~\ref{lem:dege} holds, in less than $O(m \, \alpha_G)$. 

Note that we must do this \textit{without} access to $\gtri$ or even the value of $p$: A simple procedure to obtain $\gtrip$ would scan the hyperedges of $\gtri$, i.e. the probability space, retaining them with probability $p$. This requires knowledge of $p$ and scanning all the triangles of $G$, with $O(m\,\alpha_G)$ time cost. As we cannot afford this cost, we use an alternative equivalent strategy for generating $\gtrip$. 

Sampling uniformly at random hyperedges of $\gtri$ corresponds to uniformly sampling triangles of $G$: there is a 1-to-1 correspondence between triangles in $G$ and edges in $\gtri$, thus our probability space is the set of triangles of $G$, each one corresponding to a different hyperedge of $\gtri$. As obtaining the $i$th triangle of $G$ is not easy, we use a different strategy to sample them, based on wedges. In particular, we observe that if the nodes are ordered, each triangle corresponds to a unique \textit{forward wedge} (i.e., a wedge in which the middle node is the smallest of the three). For this reason, we sample forward wedges, which can be handled more easily, rather than triangles, and we accept them, adding the corresponding hyperdedge of $\gtri$ to $\gtrip$ only if they are closed.

More precisely, let $W$ be the number of forward edges in $G$. Our sampling procedure works as follows.

\begin{enumerate}
\item\label{enum:1}
Start setting $p=\zeta \frac{m\log m}{W\epsilon^{2}}$.
\item\label{enum:2} Extract each forward wedge with probability $p$.
\item\label{enum:3}  For each forward wedge $w$ sampled, if $w$ is closed, i.e. it corresponds to a triangle, add to $\gtrip$ the hyperedge of $\gtri$ which corresponds to $w$.
\item\label{enum:4} If the number of hyperedges of $\gtrip$ is not at least $\frac32\zeta\frac{m\log m}{\epsilon^{2}}$, double $p$. If $p\geq 1$ run the exact algorithm, else clear $\gtrip$ and go to step~\ref{enum:2}.
\end{enumerate}

First of all, $W$ can be computed in $O(m)$ time, using a degeneracy ordering of the graph: Let $\degr(u)$ be the \textit{forward degrees} of a node $u$ in $G$, i.e. the number of edges going forward in the order from $u$. As each distinct pair of forward neighbors of $u$ makes with $u$ a distinct forward wedge, we have $W=\sum_{u\in V(G)} \degr(u)(\degr(u)-1)/2$.

In step~\ref{enum:2}, in order avoid paying $O(W)$ time to choose which forward edges should be kept, we use Method 9 in~\cite{fan1962development} (as it is often done to generate $G(n,p)$ Erd\H{o}s-R\'{e}nyi graphs), which allows us to obtain the same probability distribution by only paying the cost to sample $k$ geometrically distributed variables with parameter $p$, where $k$ is the number of sampled wedges. To generate a geometric random variable we use the method in~\cite{bringmann2013exact} whose expected cost is $O(1 + \log(1/p)/w)$, where $w$ is $\Omega(\log n)$, and in our case is constant as $1/p$ is $O(\frac{n}{\epsilon^2})$ and $\epsilon \ge 1/n$.\footnote{As $t_G \le n$, $\epsilon<1/n$ does not give any more information than setting $\epsilon = 1/n$.}

We prove that step~\ref{enum:2} and~\ref{enum:3} sample triangles independently at random with probability $p$. 

\begin{lemma}
Step~\ref{enum:2} and~\ref{enum:3} of our sampling process to obtain $\gtrip$ are equivalent to scanning each hyperedge of $\gtri$ and accepting it with probability $p$.
\label{lem:wedge}
\end{lemma}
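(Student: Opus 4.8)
The plan is to exploit the bijection between the triangles of $G$ and the \emph{closed} forward wedges. First I would recall that, once the vertices are totally ordered, every triangle $\{u,v,z\}$ of $G$ determines exactly one forward wedge, namely the length-two path centered at the smallest of $u,v,z$; conversely, a forward wedge centered at a node $b$ with endpoints $a,c$ (so $b$ is the smallest of the three) closes into a triangle precisely when $\{a,c\}\in E(G)$. Hence the map sending each triangle to its uniquely determined forward wedge is a bijection between the set of triangles of $G$ — equivalently, the set of hyperedges of $\gtri$ — and the set of closed forward wedges. This is the structural fact on which the whole equivalence rests.

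Next I would analyze what steps~\ref{enum:2} and~\ref{enum:3} actually do, hyperedge by hyperedge. In step~\ref{enum:2} every forward wedge (closed or open) is retained independently with probability $p$. A retained wedge contributes its corresponding hyperedge to $\gtrip$ in step~\ref{enum:3} if and only if it is closed; open wedges are simply discarded and never influence $\gtrip$. Combining this with the bijection above, a fixed hyperedge $(v_{e_1},v_{e_2},v_{e_3})$ of $\gtri$ is added to $\gtrip$ if and only if its unique associated forward wedge is sampled in step~\ref{enum:2}, an event of probability exactly $p$. Thus each hyperedge is accepted with the correct marginal probability.

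Finally, I would establish mutual independence. Distinct triangles are associated, through the bijection, with distinct forward wedges, and step~\ref{enum:2} samples the forward wedges independently. Therefore the indicator that a given hyperedge of $\gtri$ lands in $\gtrip$ is a function of a single wedge-indicator variable, and different hyperedges depend on disjoint sets of these independent variables; consequently the inclusion events are mutually independent, each of probability $p$. This is precisely the distribution produced by scanning the hyperedges of $\gtri$ one at a time and accepting each with probability $p$, which is the claim.

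I do not expect a genuine obstacle here: the only point that needs care is verifying that the triangle-to-forward-wedge correspondence is a true bijection — that no triangle yields two forward wedges and no two triangles share the same forward wedge — since this is exactly what simultaneously guarantees the per-hyperedge probability $p$ and the mutual independence. The irrelevance of open wedges (they are sampled but never affect $\gtrip$) should also be stated explicitly so that the rejection of open wedges does not distort the distribution on the closed ones.
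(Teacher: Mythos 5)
Your proposal is correct and follows essentially the same route as the paper's proof: both rest on the bijection between hyperedges of $G^{^{\triangle}}$ (triangles of $G$) and closed forward wedges, and on the observation that open wedges, though sampled, never affect the outcome. Your explicit treatment of mutual independence (each hyperedge's inclusion depends on a distinct, independently sampled wedge indicator) is a welcome elaboration of a point the paper leaves implicit, but it is not a different argument.
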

\begin{proof}

Step~\ref{enum:3} adds the hyperedge corresponding to a sampled wedge $w$ only if $w$ is closed. Scanning all the pairs $w$ and sampling each of them with probability $p$ implies that also each closed wedge $w$ is sampled with probability $p$. Since in $\gtri$ each hyperedge is involved in just one triangle and hence it corresponds to a unique forward wedge $w$, we obtain that a hyperedge of $\gtri$ is also added to $\gtrip$ with probability $p$.
\end{proof}

The sampling procedure repeats step~\ref{enum:2} and step~\ref{enum:3}, performing doubling on $p$, and ends up sampling triangles with probability at least $\zeta\frac{m\log m}{(T_G+1)\epsilon^{2}}$ \whp as shown next.

\begin{lemma}\label{lem:sampl-prob}
Our sampling process of $\gtrip$ samples edges of $\gtri$ independently and uniformly at random with probability $\zeta \frac{m\log m}{(T_G+1)\epsilon^{2}} \le p \le 4\zeta \frac{m\log m}{(T_G+1)\epsilon^{2}}$, \whp.
\end{lemma}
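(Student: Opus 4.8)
\textbf{Proof proposal for Lemma~\ref{lem:sampl-prob}.}

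The plan is to analyze the doubling loop in steps~\ref{enum:1}--\ref{enum:4} and argue that \whp it terminates with a value of $p$ lying in the stated window $\left[\zeta \frac{m\log m}{(T_G+1)\epsilon^{2}},\, 4\zeta \frac{m\log m}{(T_G+1)\epsilon^{2}}\right]$. By Lemma~\ref{lem:wedge}, once $p$ is fixed, each hyperedge of $\gtri$ is retained in $\gtrip$ independently with probability exactly $p$, so the only thing to control is the \emph{final} value of $p$ produced by the doubling. Since $p$ starts at $\zeta\frac{m\log m}{W\epsilon^2}$ and is multiplied by $2$ each time step~\ref{enum:4} fails, after some number of doublings $p$ takes the form $2^j\zeta\frac{m\log m}{W\epsilon^2}$. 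The target window has width a factor of $4$ (i.e.\ two doublings), so the core task is to show that, \whp, the first $p$ at which the acceptance test in step~\ref{enum:4} succeeds already falls inside that window, rather than overshooting.

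First I would set up the concentration estimate. For a fixed $p$, the number $|E(\gtrip)|$ of sampled hyperedges is a sum of $T_G$ independent Bernoulli$(p)$ indicators, hence has mean $\mu = p\,T_G$ and is sharply concentrated by a Chernoff bound. Note that $W\le \binom{n}{2}\cdot$ (forward degree) is polynomial in $n$ and $W\ge T_G$, while $p$ ranges over the geometric grid; crucially, whenever $p$ reaches a value $p^\star$ with $p^\star \ge 2\zeta\frac{m\log m}{(T_G+1)\epsilon^2}$, the mean satisfies $\mu = p^\star T_G \ge 2\zeta\frac{m\log m}{\epsilon^2}\cdot\frac{T_G}{T_G+1}$, which for $T_G\ge 1$ comfortably exceeds the acceptance threshold $\frac32\zeta\frac{m\log m}{\epsilon^2}$. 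Applying a lower-tail Chernoff bound, the probability that such a $p^\star$ fails the test (i.e.\ $|E(\gtrip)| < \frac32\zeta\frac{m\log m}{\epsilon^2}$) is at most $\exp\!\left(-c\,\zeta\frac{m\log m}{\epsilon^2}\right)$, which is $m^{-\Omega(\zeta)}$ once we recall $m\log m/\epsilon^2 \ge \log m$; choosing $\zeta$ large enough makes this $O(m^{-c})$ for any desired constant $c$. Symmetrically, an upper-tail Chernoff bound shows that any $p$ with mean $\mu \le \zeta\frac{m\log m}{\epsilon^2}$ (so $p$ below the lower end of the window) fails the test \whp, so the loop does not stop too early with too few hyperedges.

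Combining the two tails gives the two-sided guarantee: \whp the loop does not stop while $p$ is below $\zeta\frac{m\log m}{(T_G+1)\epsilon^2}$ (upper-tail bound keeps $|E(\gtrip)|$ under threshold there), and \whp it \emph{does} stop once $p$ first reaches or exceeds $2\zeta\frac{m\log m}{(T_G+1)\epsilon^2}$ (lower-tail bound forces the threshold to be met). Since consecutive grid values differ by a factor of $2$, the first accepting value of $p$ therefore lies in $\left[\zeta\frac{m\log m}{(T_G+1)\epsilon^2},\, 4\zeta\frac{m\log m}{(T_G+1)\epsilon^2}\right]$, matching the claim. A union bound over the $O(\log m)$ grid values (the doubling can run at most $\log W = O(\log n)$ times before $p\ge 1$) preserves the high-probability guarantee, absorbing the extra $\log$ factor into the choice of $\zeta$. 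The edge case $T_G=0$ (and the fallback $p\ge 1$ triggering the exact algorithm in step~\ref{enum:4}) must be handled separately, since then no hyperedges exist and the bound degenerates; here the $+1$ in $T_G+1$ and the exact-algorithm fallback guarantee correctness directly.

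The main obstacle I anticipate is the two-sided pinning of the stopping value: a naive Chernoff argument controls each fixed $p$ in isolation, but I must rule out a ``bad'' intermediate $p$ that happens to overshoot the threshold by luck, thereby stopping the loop with $p$ smaller than the window's lower end. This requires the upper-tail bound to hold \emph{simultaneously} across all grid values below the window, which is exactly where the union bound over $O(\log m)$ values and the careful calibration of the threshold gap (a factor of $\frac32$ versus the window endpoints at $1$ and $2$) enters. Getting the constants in the Chernoff exponents to line up with $\zeta$ so that the per-value failure probability is $o(m^{-2}/\log m)$ — strong enough to survive the union bound and yield the $\Omega(1-m^{-2})$ success rate demanded downstream in Lemma~\ref{lem:estimate} — is the delicate quantitative step.
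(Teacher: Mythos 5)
Your proposal is correct and follows essentially the same route as the paper's proof: reduce via Lemma~\ref{lem:wedge} to pinning the final value of $p$, then use a lower-tail Chernoff bound to show the stopping test succeeds \whp once $p \ge 2\zeta\frac{m\log m}{(T_G+1)\epsilon^2}$ and an upper-tail bound to show it fails \whp while $p < \zeta\frac{m\log m}{(T_G+1)\epsilon^2}$, so the doubling lands in the factor-$4$ window. Your explicit union bound over the $O(\log m)$ grid values is a minor tightening the paper leaves implicit (harmless, since the per-value failure probability is exponentially small in $m$).
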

\begin{proof}
From Lemma~\ref{lem:wedge} we know that $\gtrip$ is an uniform and independent sample of $\gtri$ with probability $p$. Thus, we only have to prove that when the algorithm terminates we have that $p$ is bounded as in the statement.

To prove the lower bound on $p$, we show that if $p\le \zeta \frac{m\log m}{(T_G+1)\epsilon^{2}}$ then the number of edges in $\gtrip$ is less than $\frac{3}{2}\zeta\frac{m\log m}{\epsilon^2}$ \whp. We use a well-known application of the Chernoff bounds on the tail of a binomial distribution, i.e. that for a binomial variable $X\sim B(T_G, p)$ (where $X$ represents the number of edges in $\gtrip$) we have $\mathbb{P}\left\{X \ge (1+\delta) T_Gp\right\} \le \exp\left(-\frac{\delta^2}{3}T_Gp\right)$.

Setting $\delta = \frac{3}{2}\zeta \frac{m \log m}{T_Gp\epsilon^2}-1$, which trivial calculations show that is at least $\frac{1}{2}\zeta\frac{m \log m}{T_Gp\epsilon^2}$ if we assume $p\le \zeta\frac{m \log m}{(T_G+1)\epsilon^{2}}$, we obtain

\[
\begin{array}{rcl}
\mathbb{P}\left\{X \ge \frac{3}{2}\zeta \frac{m \log m}{\epsilon^2}\right\} & \le & \exp\left(-\frac{1}{12}\zeta^2 \frac{m^2 \log^2 m}{T_Gp\epsilon^4}\right) \\
& \le & \exp\left(-\frac{1}{12}\zeta \frac{m \log m}{\epsilon^2}\right)
\end{array}
\]

which is exponentially small in $m$ and $\epsilon^{-1}$. 

We can obtain a similar upper bound on $\mathbb{P}\left\{X \le \frac{3}{2}\zeta \frac{m \log m}{\epsilon^2}\right\}$ when $p \ge 2\zeta\frac{m \log m}{(T_G+1)\epsilon^{2}}$ using the Chernoff bound on the lower tail of the distribution of $X$.
Thus, \whp we have that the stopping condition holds if $p \ge 2\zeta\frac{m \log m}{(T_G+1)\epsilon^{2}}$ and does not hold if $p < \zeta\frac{m \log m}{(T_G+1)\epsilon^{2}}$. 
Since we double $p$ at each step, $p$ is twice the value $p'$ it had in the last step where the stopping condition did not hold. This means that, \whp, $p'<2\zeta\frac{m \log m}{(T_G+1)\epsilon^{2}}$ and thus $p < 4\zeta\frac{m \log m}{(T_G+1)\epsilon^{2}}$. On the other hand, the stopping condition does not hold when $p < \zeta\frac{m \log m}{(T_G+1)\epsilon^{2}}$, \whp, thus \whp $p$ must be at least $\zeta\frac{m \log m}{(T_G+1)\epsilon^{2}}$, proving the statement.
\end{proof}

Which means the size of $\gtrip$ is at most $4\zeta\frac{m\log m}{\epsilon^{2}}$ in expectation. Finally, the complexity of the process is as follows (we recall $d_G \le 2\cdot \alpha_G$):

\begin{lemma}
Sampling $\gtrip$ costs $O\left(\min\left\{\frac{m \log m}{(T_G+1)\epsilon^{2}}, 1\right\}md_G\right)$ time in expectation.
\label{lem:cost}
\end{lemma}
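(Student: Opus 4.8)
The plan is to charge the whole running time to the number of forward wedges touched across all doubling rounds, to show this forms a geometric series dominated by its last round, and to convert the resulting $p^\ast W$ bound into the statement via Lemma~\ref{lem:sampl-prob} and a combinatorial bound on $W$. First I would dispose of the one-time work: a degeneracy order and the forward degrees $\delta(u)$ are computed in $O(m)$ time (Section~\ref{sec:tools}), $W=\sum_u\binom{\delta(u)}{2}$ follows in $O(m)$, and I would store prefix sums of the per-node wedge counts together with a hash set of $E(G)$, so that the $j$-th forward wedge is addressable in $O(1)$ amortized while scanning in order and its closure is testable with one $O(1)$ adjacency look-up. The combinatorial bound I rely on is $W=O(m\,d_G)$: since $\delta(u)\le d_G$ for every $u$ and $\sum_u\delta(u)=m$, we get $W\le\tfrac{d_G}{2}\sum_u\delta(u)=\tfrac{md_G}{2}$; this is the step that turns the sample size into the $md_G$ factor of the statement.

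Next I would bound the cost of one round at probability $p$. By Method~9 of~\cite{fan1962development} the kept wedges are produced by drawing one geometric variable per sampled wedge (plus one to detect the end of the list), each $O(1)$ in expectation as already argued; advancing the addressing pointer over skipped wedges is monotone, hence $O(n)$ over the whole round; and each sampled wedge is tested for closure (step~\ref{enum:3}) with a single $O(1)$ look-up. A round therefore costs $O(n+pW)$ in expectation, and the key observation is that $p\ge p_0=\zeta\frac{m\log m}{W\epsilon^2}$ forces $pW\ge p_0W=\zeta\frac{m\log m}{\epsilon^2}=\Omega(m)=\Omega(n)$, so that the $O(n)$ overhead and the $O(m)$ preprocessing are both absorbed and each round costs $O(pW)$.

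Then I would sum over rounds. Since $p$ doubles, the costs $O(p_iW)$ form a geometric series dominated by its last term, so the total is $O(p^\ast W)$, where $p^\ast$ is the terminating probability; Lemma~\ref{lem:sampl-prob} gives $p^\ast\le 4\zeta\frac{m\log m}{(T_G+1)\epsilon^2}$ \whp, hence $O\!\left(\frac{m\log m}{(T_G+1)\epsilon^2}\,md_G\right)$. For the $\min\{\cdot,1\}$ factor I would split into regimes: when $\frac{m\log m}{(T_G+1)\epsilon^2}=\Omega(1)$, the doubling drives $p$ to $\ge 1$ and step~\ref{enum:4} runs the exact algorithm in $O(md_G)$, while the preceding rounds have $p_i<1$ and sum to $O(W)=O(md_G)$, so the minimum equals $1$; otherwise $p^\ast<1$ and the bound is the fractional one. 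Combining yields $O\!\left(\min\{\frac{m\log m}{(T_G+1)\epsilon^2},1\}\,md_G\right)$.

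The step I expect to be the main obstacle is upgrading this to an \emph{expected}-time guarantee rather than a merely \whp one: the per-round sampling cost grows geometrically in the round index, whereas the probability of surviving past the critical value $p\approx\frac{m\log m}{(T_G+1)\epsilon^2}$ decays super-exponentially, since surviving a round means the previous round fell below the stopping threshold, an event of exponentially small probability by the Chernoff lower-tail bound used in the proof of Lemma~\ref{lem:sampl-prob}. I would need to check that these two effects balance, so that the expectation is governed by the single critical round while the rare, expensive late rounds and the exact-algorithm fallback (of only polynomial cost $O(md_G)$) contribute a negligible additive term; the remaining ingredients, namely $W=O(md_G)$ and the $\Omega(n)$ lower bound on $pW$ that absorbs every per-round overhead, are routine once set up.
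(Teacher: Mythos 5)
Your proposal is correct and follows essentially the same route as the paper's proof: per-round expected cost $O(pW)$ from the geometric-skip sampling, a geometric series over the doubling rounds dominated by the last one, the bound on the terminating $p$ from Lemma~\ref{lem:sampl-prob}, the bound $W = O(m\,d_G)$ in degeneracy order, and the exact-algorithm fallback once $p \ge 1$ yielding the $\min\{\cdot,1\}$ factor. The one point where you go beyond the paper --- converting the \whp bound on the terminating $p$ into a genuine expectation bound by observing that the probability of surviving past the critical round decays super-exponentially while the per-round cost grows only geometrically and is capped at $O(m\,d_G)$ --- is a subtlety the paper's proof silently elides, and your sketch of how to close it is the right one.
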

\begin{proof}
Since the expected cost of extracting a geometrically distributed random variable is constant, the expected running time of the sampling in step~2 is the expected number of sampled wedges, i.e. $O(pW)$. Since at each iteration $p$ doubles and $p$ \whp is at most $4\zeta\frac{m \log m}{(T_G+1)\epsilon^{2}}$, the last iteration dominates the running time. If $p\geq 1$ the same happens with the exact algorithm. Hence, we obtain the claimed running time, noticing that $W = O(md_G)$ if the graph is in degeneracy order. Note that $\min\left\{\frac{m \log m}{(T_G+1)\epsilon^{2}}, 1\right\}md_G \ge m$, meaning that the cost of computing a degeneracy order is dominated by the running time of the rest of the algorithm.
\end{proof}

\subsubsection{Approximate truss order}
\label{sec:degtrip}

By Remark~\ref{rem:dgtri}, the degeneracy of $\gtri$ corresponds to the trussness of $G$. In this section we aim to compute an approximate truss order of $G$ by computing an approximate degeneracy ordering of $\gtri$. This can be done using $\gtrip$, and extending some of the techniques in~\cite{farach2016latin} to hypergraphs.

Given a hypergraph $H$ and a subset of its nodes $U$, we indicate with $H[U]$ the subhypergraph induced by the nodes in $U$, i.e. the hypergraph having as nodes $U$ and as hyperedges the ones of $H$ subset of $U$. 

\newcommand{\mindegr}{\degr_{min}\xspace}

\begin{lemma}
Let $H$ be a hypergraph with $n$ nodes and $m$ hyperedges, and $H_p$ be the hypergraph obtained from $H$ by retaining each hyperedge independently with probability $p = \zeta\frac{n\log n}{\epsilon^2m}$ for any constant $\zeta > 54$. Moreover, let $U$ be a subset of the nodes of $H$ and $v$ be the smallest-degree node in $H_p[U]$. Then, with probability at least $1-n^{1-\frac{\zeta}{27}}$, the following holds:
$\degr_{H[U]}(v) \le \max\left\{\frac{m}{n}, (1+\epsilon)\mindegr(H[U])\right\}$.
\label{lem:mindeg}
\end{lemma}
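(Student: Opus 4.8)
The plan is to follow the sampling-and-concentration strategy used for approximate degeneracy orderings in \cite{farach2016latin,Farach-ColtonT14}, adapted to the $3$-uniform hypergraph $\gtri$. The only probabilistic fact I need is that for a fixed node $u$, its sampled degree $\degr_{H_p[U]}(u)$ is a sum of independent indicators over the hyperedges incident to $u$ inside $H[U]$, hence $\degr_{H_p[U]}(u)\sim\mathrm{Bin}(\degr_{H[U]}(u),p)$ with mean $p\,\degr_{H[U]}(u)$; crucially $p\cdot\frac mn=\zeta\log n/\epsilon^2$, so any node whose true degree is $\Omega(m/n)$ has sampled-degree mean $\Omega(\log n/\epsilon^2)$, which is exactly the regime where multiplicative Chernoff bounds give $n^{-\Omega(\zeta)}$ tails. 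Sampled degrees of distinct nodes are not independent (they share hyperedges), but I only ever use a union bound, so this is harmless.

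Write $\delta:=\mindegr(H[U])$ and $\tau:=\max\{m/n,(1+\epsilon)\delta\}$, so the claim is $\Pr[\degr_{H[U]}(v)>\tau]\le n^{1-\zeta/27}$. Call a node \emph{bad} if its true degree in $H[U]$ exceeds $\tau$, and fix a true-minimum node $w$ with $\degr_{H[U]}(w)=\delta$. Since $v$ minimizes the sampled degree, if $v$ is bad then in particular $\degr_{H_p[U]}(v)\le\degr_{H_p[U]}(w)$ with $v$ bad. Choosing an intermediate threshold $T:=(1+\tfrac\epsilon2)\,p\,\tfrac{\tau}{1+\epsilon}$, I split this event: either $w$'s sampled degree exceeds $T$, or some bad node has sampled degree at most $T$. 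Hence $\{\degr_{H[U]}(v)>\tau\}\subseteq E_w\cup E_B$, where $E_w=\{\degr_{H_p[U]}(w)>T\}$ and $E_B=\{\exists\text{ bad }u:\degr_{H_p[U]}(u)\le T\}$.

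The threshold $T$ is placed so that both tails are separated from the relevant means by a $(1\pm\Theta(\epsilon))$ factor, while the anchoring quantity $p\,\tfrac{\tau}{1+\epsilon}\ge p\,\tfrac{m}{n(1+\epsilon)}=\tfrac{\zeta\log n}{\epsilon^2(1+\epsilon)}$ is $\Omega(\log n/\epsilon^2)$ regardless of how small $\delta$ is. For $E_w$: since $p\delta\le p\,\tfrac{\tau}{1+\epsilon}$ (because $\tau\ge(1+\epsilon)\delta$), the threshold $T$ exceeds $w$'s mean by at least a $(1+\tfrac\epsilon2)$ factor, and a Chernoff upper-tail bound (using stochastic monotonicity to replace $\delta$ by $\lceil\tfrac{\tau}{1+\epsilon}\rceil$ when $\delta$ is tiny) gives $\Pr[E_w]\le n^{-\Omega(\zeta)}$. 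For $E_B$: every bad node has mean $>p\tau=(1+\epsilon)\,p\,\tfrac{\tau}{1+\epsilon}$, so $T$ lies below its mean by a $(1-\Theta(\epsilon))$ factor, and a Chernoff lower-tail bound gives $n^{-\Omega(\zeta)}$ per node; a union bound over the at most $n$ bad nodes yields $\Pr[E_B]\le n^{1-\Omega(\zeta)}$. Summing the two bounds proves the lemma, and optimizing the Chernoff constants together with the exact placement of $T$ is what pins the exponent down to $\zeta/27$; the hypothesis $\zeta>54$ then makes the failure probability $o(1/n)$.

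The main obstacle is the small-$\delta$ regime, where $p\delta$ can fall far below the $\Omega(\log n/\epsilon^2)$ threshold needed for concentration — indeed $w$'s sampled degree may well be $0$. This is precisely why the guarantee carries the $\max\{m/n,\cdot\}$ slack rather than a clean $(1+\epsilon)\delta$: the proof never tries to concentrate $w$'s degree around a tiny mean, and instead anchors everything to $p\cdot\tfrac mn=\zeta\log n/\epsilon^2$, which is large by construction. The remaining delicate point is purely arithmetic — choosing $T$ and the deviation parameters so that both Chernoff exponents clear the target $\zeta/27$ simultaneously; this is routine constant-chasing rather than a conceptual difficulty, and the hypergraph setting adds nothing new, since the entire argument runs through the one-dimensional binomials $\mathrm{Bin}(\degr_{H[U]}(u),p)$ and a union bound.
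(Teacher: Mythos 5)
Your proposal is correct and follows essentially the same route as the paper: both reduce the claim to comparing the sampled degree of a high-degree node against that of a true minimum-degree node, handle the small-degree regime by stochastically replacing the anchor's degree with $\frac{1}{1+\epsilon}\frac{m}{n}$ so the Chernoff exponents stay $\Omega(\zeta\log n)$, and finish with a union bound over at most $n$ nodes. The only difference is organizational — you introduce an explicit common threshold $T$ separating one upper-tail event for the anchor from a union of lower-tail events for the ``bad'' nodes, whereas the paper folds both tails into each pairwise comparison — and your remaining constant-chasing to reach the exponent $\zeta/27$ is indeed routine (the paper's choice $c=\epsilon/(2+\epsilon)$ yields it directly).
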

\begin{proof}
Let $\zeta = 27\gamma$, so that $\gamma \ge 2$. We prove that, for any two nodes $x$ and $y$ with $\degr_{H[U]}(x) \ge \max\left\{\frac{m}{n}, (1+\epsilon)\degr_{H[U]}(y)\right\}$, we have $\mathbb{P}\left\{\degr_{H_p[U]}(x) \le \degr_{H_p[U]}(y)\right\} \le \frac{1}{n^{\gamma}}$. This implies that the thesis holds \whp, as the probability that the thesis fails is bounded by the probability that at least one node $x$ in $U$ has a degree lower than the one of $v$ in $H_p[U]$ while satisfying $\degr_{H[U]}(x) \ge \max\left\{\frac{m}{n}, (1+\epsilon)\degr_{H[U]}(v)\right\}$. This probability is bounded by $n\cdot \frac{1}{n^{\gamma}}=\frac{1}{n^{\gamma-1}}$, which means that the statement holds.

Note that we only need to consider the case when $\delta_{H[U]}(y)\ge \frac{1}{1+\epsilon}\frac{m}{n}$. 
Indeed, 
when $\delta_{H[U]}(y)< \frac{1}{1+\epsilon}\frac{m}{n}$, we have
$\max\left\{\frac{m}{n}, (1+\epsilon)\degr_{H[U]}(y)\right\}=\frac{m}{n}$. Moreover, if we replace the nodes $y$ such that $\delta_{H[U]}(y)< \frac{1}{1+\epsilon}\frac{m}{n}$ with nodes of degree equal to $\frac{1}{1+\epsilon}\frac{m}{n}$, the set of possible values of $x$ does not change, and for a fixed $x$ the probability $\mathbb{P}\left\{\degr_{H_p[U]}(x) \le \degr_{H_p[U]}(y)\right\}$ increases when $\degr_{H_p[U]}(y)$ increases, meaning that bounding this probability in the case $\delta_{H[U]}(y)\ge \frac{1}{1+\epsilon}\frac{m}{n}$ implies a bound for the case $\delta_{H[U]}(y)< \frac{1}{1+\epsilon}\frac{m}{n}$.

Thus, as $\degr_{H[U]}(x) \ge \max\left\{\frac{m}{n}, (1+\epsilon)\degr_{H[U]}(y)\right\}$, we can assume that $\delta_{H[U]}(y)\ge \frac{1}{1+\epsilon}\frac{m}{n}$, $\delta_{H[U]}(x)\ge \frac{m}{n}$, and $\degr_{H[U]}(x) \ge (1+\epsilon)\degr_{H[U]}(y)$. Defining $c=\frac{\epsilon}{2+\epsilon}$, so that $(1-c)\degr_{H[U]}(x) \ge (1+c)\degr_{H[U]}(y)$, we have that the event $E_1=(\degr_{H_p[U]}(x) \le \degr_{H_p[U]}(y))$ implies the event $E_2=(\degr_{H_p[U]}(x) \le (1-c)p\degr_{H[U]}(x) \vee \degr_{H_p[U]}(y) \ge (1+c)p\degr_{H[U]}(y))$, as $\lnot E_2$ implies $\lnot E_1$ because of the choice of $c$.
Hence, we have:

\[
\begin{array}{rcl}
    \mathbb{P}\left\{\degr_{H_p[U]}(x) \le \degr_{H_p[U]}(y)\right\} & \le & \mathbb{P}\left\{\degr_{H_p[U]}(x) \le (1-c)p\degr_{H[U]}(x) \vee \degr_{H_p[U]}(y) \ge (1+c)p\degr_{H[U]}(y)\right\} \\
    & \le & \exp\left(-\frac{c^2}{2}p\degr_{H[U]}(x)\right) + \exp\left(-\frac{c^2}{c+2}p\degr_{H[U]}(y)\right) \\
    & \le & 2\exp\left(-\gamma\log n\right) \\
    & \le & \frac{1}{n^{\gamma}}
\end{array}
\]
where the second inequality follows from the fact that the degree distribution of a vertex in $H_p$ is binomial, as well as the Chernoff bounds, and the third one from the fact that $c^2p\degr_{H[U]}(y) \ge c^2p\frac{m}{n} \ge 3\gamma\log n$ because of our hypotesis on $p$, and similarly for $c^2p\degr_{H[U]}(x)$.
\end{proof}

In the following, we obtain the main property of our approximate truss order.

\begin{lemma}
Given a graph $G$ with $n$ nodes and $m$ edges, let $v_{e_1},\ldots, v_{e_m}$, be the order of the edges of $E(G)$ corresponding to a degeneracy ordering the nodes of $\gtrip$, for $p = \zeta\frac{m\log m}{\epsilon^2T_G}$ for any constant $\zeta>81$. Then the following holds on $G$ with probability at least $1-m^{2-\frac{\zeta}{27}}$: the forward triangles of $e_i$ in $G$ are at most $\max\left\{ \frac{T_G}{m}, (1+\epsilon)s\right\}$, where $s$ is the smallest support of an edge in $G_{\geq e_i}$.\footnote{To guarantee a failure probability that is $o(m^{-2})$, as required by Lemma~\ref{lem:estimate}, we need $\zeta > 108$.}
\label{lem:dege}
\end{lemma}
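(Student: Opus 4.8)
The plan is to read the statement entirely through the dictionary between $G$ and its triangle hypergraph and then invoke Lemma~\ref{lem:mindeg}. Set $H=\gtri$ and $H_p=\gtrip$, and recall from Section~\ref{sec:gtri} that $H$ has exactly $m$ nodes (the edges of $G$) and $T_G$ hyperedges (the triangles of $G$), and that the sampling probability $p=\zeta\frac{m\log m}{\epsilon^2 T_G}$ is precisely the probability $\zeta\frac{|V(H)|\log|V(H)|}{\epsilon^2 |E(H)|}$ required by Lemma~\ref{lem:mindeg}. By definition $\langle v_{e_1},\ldots,v_{e_m}\rangle$ is a degeneracy ordering of $H_p$: writing $U_i=\{v_{e_i},\ldots,v_{e_m}\}$ for the residual node set at step $i$, the node $v_{e_i}$ is a minimum-degree node of $H_p[U_i]$. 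The key observation is that induced degrees in $H$ translate into forward quantities in $G$: $\degr_{H[U_i]}(v_{e_j})=\supp_{G_{\ge e_i}}(e_j)$ for every $j\ge i$, so $\degr_{H[U_i]}(v_{e_i})$ is exactly the number of forward triangles of $e_i$, and $\mindegr(H[U_i])=\minsupp(G_{\ge e_i})=s$, the smallest forward support in $G_{\ge e_i}$.

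With this dictionary, Lemma~\ref{lem:mindeg} applied to the fixed set $U=U_i$ states exactly that, with probability at least $1-m^{1-\zeta/27}$, the minimum-degree node $v_{e_i}$ of $H_p[U_i]$ satisfies $\degr_{H[U_i]}(v_{e_i})\le\max\{\frac{T_G}{m},(1+\epsilon)\,\mindegr(H[U_i])\}$, i.e. the desired bound on the forward triangles of $e_i$. The target failure probability $m^{2-\zeta/27}$ is then $m$ times the per-step failure probability, so the plan is to apply the lemma to each of the $m$ residual sets arising in the peeling of $H_p$ and take a union bound over the $m$ steps.

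The main obstacle --- and the only genuinely delicate point --- is that the residual sets $U_i$ are \emph{not} fixed: each $U_i$ is random, being determined by the realization of $H_p$ through the preceding $i-1$ removals. Hence Lemma~\ref{lem:mindeg}, which is a statement about a \emph{fixed} $U$, cannot be invoked verbatim for the adaptively chosen $U_i$, and a crude union bound over all $2^m$ candidate subsets would be hopelessly lossy. To overcome this I would extend the decoupling argument of~\cite{farach2016latin} from graphs to the hypergraph $H_p$: the event that $U=U_i$ is reached is governed by the sampled hyperedges incident to the already-removed nodes $V(H)\setminus U$, whereas the bad event at $U$ --- the minimum-degree node of $H_p[U]$ violating the bound --- is a function of the hyperedges lying entirely inside $U$. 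Controlling the residual coupling between these two collections, via conditioning / deferred decisions so that the per-set bound of Lemma~\ref{lem:mindeg} survives conditioning on $U_i=U$, is what lets the union bound range over only the $m$ sets of the peeling chain rather than all subsets, yielding the claimed $1-m^{2-\zeta/27}$ success probability. Finally, I would observe that the constant $\zeta>81$ is what makes this probability a decreasing polynomial in $m$ (and $\zeta>108$ drives it below $m^{-2}$, as required downstream by Lemma~\ref{lem:estimate}).
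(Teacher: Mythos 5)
Your proposal follows essentially the same route as the paper's proof: identify forward supports in $G$ with induced degrees in $\gtri$, apply Lemma~\ref{lem:mindeg} once per step of the peeling of $\gtrip$, and union-bound over the $m$ steps to turn the per-step failure probability $m^{1-\zeta/27}$ into the claimed $m^{2-\zeta/27}$. The adaptivity concern you raise (that the residual sets $U_i$ are random, so Lemma~\ref{lem:mindeg} for a fixed $U$ does not apply verbatim) is a genuine subtlety, but the paper's own proof simply states that it ``applies $m$ times Lemma~\ref{lem:mindeg}'' without addressing it, so your decoupling discussion is additional care beyond the published argument --- though note that the decoupling is not as clean as you state it, since hyperedges lying entirely inside $U_i$ do influence the degrees governing the earlier peeling choices, so the two collections of hyperedges you wish to separate are not disjoint.
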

\begin{proof}
By construction (see Section~\ref{sec:sampling}), we have that each edge of $\gtri$ appears in $\gtrip$ with probability $\zeta\frac{m\log m}{(T_G+1)\epsilon^{2}}$, independently from the others. Since a degeneracy ordering is computed by iteratively removing the lowest-degree vertex from $\gtrip$, applying $m$ times Lemma~\ref{lem:mindeg} we get that $v_{e_1},\ldots, v_{e_m}$ is, with probability at least $1-m^{2-\frac{\zeta}{27}}$, an approximate degeneracy ordering of $\gtri$.

This means that the forward neighbors of each $v_{e_i}$ are bounded by $\max\left\{\frac{E(\gtri)}{V(\gtri)} , (1+\epsilon)\degr_{min}\right\}$ where $\degr_{min}$ is the smallest degree of a node in $\gtri_{\geq v_{e_i}}$.
By construction of $\gtri$, the corresponding order $e_1,\ldots, e_m$ on the edges of $G$ has the property that the forward \emph{triangles} of each $e_i$ are bounded by $\max\left\{\frac{E(\gtri)}{V(\gtri)} , (1+\epsilon)s\right\}$ where $s$ is the smallest support of an edge in $G_{\geq e_i}$. Thus, the statement follows.
\end{proof}

As a result, the cost of the getting a $(1+\epsilon)$-approximate truss order follows. We can finally prove the following statement (corresponding to that of Theorem~\ref{thm:estimate}).

\begin{lemma}
There is an algorithm that computes \whp a $(1+\epsilon)$-approximate truss order of $G$ in expected time $O\left(\min\left\{\frac{m \log m}{(T_G+1)\epsilon^{2}}, 1\right\}md_G\right)$ and space $O(\epsilon^{-2}m \log m)$.
\label{lem:cost-order}
\end{lemma}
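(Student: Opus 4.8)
The plan is to implement the algorithm in two phases and read off correctness directly from the results of Section~\ref{sec:degtrip}. First I would sample the subhypergraph $\gtrip$ exactly as prescribed by the procedure of Section~\ref{sec:sampling}, and then compute a degeneracy ordering of $\gtrip$, returning the induced order on the edges of $G$ (the nodes of $\gtrip$ are, by construction, all $m$ edges of $G$). Correctness is then immediate from Lemma~\ref{lem:dege}: \whp the degeneracy ordering of $\gtrip$ yields an edge order in which every $e_i$ has at most $\max\{T_G/m,\,(1+\epsilon)s\}$ forward triangles, where $s = \minsupp(G_{\ge e_i})$, which is exactly the defining inequality~\eqref{eq:approx-order} of a $(1+\epsilon)$-approximate truss order.

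For the running time I would separate the two regimes of the sampling procedure. Producing $\gtrip$ already costs $O(\min\{\frac{m\log m}{(T_G+1)\epsilon^{2}},1\}\,md_G)$ expected time by Lemma~\ref{lem:cost} (this bound also absorbs the exact-algorithm fallback used when $p\ge 1$, which returns an exact truss order in $O(md_G)$ time and is trivially $(1+\epsilon)$-approximate). It then remains to charge the degeneracy ordering itself. A degeneracy ordering of a hypergraph is computed in time linear in its size by the usual bucket-based peeling — repeatedly extract a minimum-degree node and decrement the degrees of the nodes sharing a hyperedge with it — so, since $\gtrip$ is $3$-uniform, this is $O(|V(\gtrip)|+|E(\gtrip)|) = O(m + \epsilon^{-2}m\log m) = O(\epsilon^{-2}m\log m)$, using the expected size bound $|E(\gtrip)| = O(\epsilon^{-2}m\log m)$ noted after Lemma~\ref{lem:sampl-prob}.

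I would then show that this ordering cost is absorbed by the sampling cost. In the sampling regime the ratio of sampling time to ordering time is $\frac{md_G}{T_G+1}$, which is $\Omega(1)$ because $T_G \le W = \sum_{u} \binom{\degr(u)}{2} \le \frac{d_G-1}{2}\sum_u \degr(u) = O(md_G)$; hence $\epsilon^{-2}m\log m = O(\frac{m\log m}{(T_G+1)\epsilon^{2}}\,md_G)$ and the total time matches the claim. The space is $O(m+|E(\gtrip)|) = O(\epsilon^{-2}m\log m)$ for storing $\gtrip$ and the bucket structures, and the success probability is inherited from Lemma~\ref{lem:dege} by taking the constant $\zeta$ large enough (e.g.\ $\zeta>108$, as in its footnote) so that the failure probability drops below $m^{-2}$.

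I expect the only real subtlety — more bookkeeping than obstacle — to be reconciling the sampling probability with the hypothesis of Lemma~\ref{lem:dege}. That lemma is stated for $p=\zeta\frac{m\log m}{\epsilon^{2}T_G}$, whereas Lemma~\ref{lem:sampl-prob} only guarantees $p \ge \zeta\frac{m\log m}{(T_G+1)\epsilon^{2}}$ \whp. I would resolve this by observing that oversampling never weakens the concentration argument inside Lemma~\ref{lem:mindeg} (a larger $p$ only sharpens the Chernoff bounds), and that replacing $T_G$ by $T_G+1$ costs at most a factor $2$ when $T_G\ge 1$, which is precisely why the threshold on $\zeta$ is doubled; the degenerate case $T_G=0$ is trivial, since then $t_G=0$ and any edge order is a valid truss order.
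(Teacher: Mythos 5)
Your proposal is correct and follows essentially the same route as the paper's proof: sample $\gtrip$ via Lemma~\ref{lem:cost}, compute its degeneracy ordering in time linear in its $O(\epsilon^{-2}m\log m)$ hyperedges, invoke Lemma~\ref{lem:dege} for correctness, and observe that the ordering cost is dominated by the sampling cost since $T_G+1 = O(md_G)$. Your extra care about the $T_G$ versus $T_G+1$ mismatch in the sampling probability and the exact-algorithm fallback only makes explicit what the paper leaves implicit.
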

\begin{proof}
By Lemma~\ref{lem:cost}, $\gtrip$ can be obtained with cost $O\left(\min\left\{\frac{m \log m}{(T_G+1)\epsilon^{2}}, 1\right\}md_G\right)$ and has  $O(\epsilon^{-2}m \log m)$ hyperedges. As the cost of computing a degeneracy order is bounded by the number of hyperedges, the statement follows, since if we end up running the exact algorithm we have a total running time of $O(md_G)$, and if not then $\frac{m \log m}{(T_G+1)\epsilon^{2}}md_G \ge \epsilon^{-2}m \log m$, i.e. the cost of sampling $\gtrip$ dominates over the cost of computing its degeneracy order.
\end{proof}

\section{Conditional Lower Bounds}
\label{sub:lower-bound-T-approx}

The trivial algorithm for counting triangles in a graph takes $O(n^3)$ time: determining whether this bound is inherent or it can be improved to a truly subcubic one (e.g., $O(n^{2.9})$) by a ``combinatorial'' algorithm has been a long standing question, even referred to as a ``holy grail'' of graph algorithms~\cite{williams2010subcubic}.

What is a ``combinatorial'' algorithm is hard to define formally, but the term aims at designating practical approaches which has reasonable constant factors~\cite{williams2010subcubic,yu2015improved,aingworth1999fast}. As a rule-of-the-thumb, combinatorial algorithms are not based on fast matrix multiplication, which has better asymptotic complexity (i.e., $O(n^\omega)$ for some $\omega < 2.373$ to multiply two $n\times n$ matrices) but very large hidden constant factors due to a high number of subproblems generated in the matrix multiplication that makes it not practically efficient.

This conditional lower bound for combinatorial algorithms is further strengthened in~\cite{williams2010subcubic}, which shows its equivalence to other important problems, such as Boolean Matrix Multiplication (BMM):

\begin{theorem}[\normalfont{(from \cite{williams2010subcubic})}]
\label{thm:subcubic}
The following problems either all have truly subcubic combinatorial algorithms, or none of them do:
\begin{itemize}
    \item Boolean matrix multiplication (BMM).
    \item Detecting if a graph has a triangle.
    \item Listing up to $n^{3-\delta}$ triangles in a graph for constant $\delta > 0$.
    \item Verifying the correctness of a matrix product over the Boolean semiring.
\end{itemize}
\end{theorem}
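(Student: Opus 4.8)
Since this statement is quoted from \cite{williams2010subcubic}, the plan is to reprove it by exhibiting a cycle of \emph{truly subcubic combinatorial reductions} among the four problems, under which a truly subcubic combinatorial algorithm for any one of them propagates to all the others; the ``all or none'' dichotomy is then immediate from transitivity around the cycle. Throughout, I write $A \le B$ to mean that a truly subcubic combinatorial algorithm for $B$ yields one for $A$, via combinatorial preprocessing in subcubic time together with calls to $B$ whose instances have total size $O(n^{3-\epsilon})$ for some fixed $\epsilon>0$. The whole argument must be carried out so that ``combinatorial'' is preserved at every step, i.e. no reduction secretly invokes fast matrix multiplication.

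I would first dispatch the routine reductions. Triangle \emph{detection} $\le$ triangle \emph{listing} is trivial: run the lister and answer ``yes'' iff it reports at least one triangle. Triangle detection $\le$ BMM (and more generally BMM verification $\le$ BMM) is also immediate: a triangle exists iff some off-diagonal entry of $A^2 \wedge A$ is nonzero, where $A$ is the adjacency matrix, and a claimed product can be verified by computing it and comparing in $O(n^2)$ extra time. The only nontrivial ``easy'' step is triangle listing $\le$ BMM: I would partition $V$ into $g$ groups of size $n/g$, use a Boolean product restricted to each of the $g^3$ triples of groups to decide which triples contain a triangle, and then recurse inside the positive triples, charging the enumeration work against the triangles actually output. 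Balancing $g$ against the output bound keeps the total truly subcubic precisely as long as at most $n^{3-\delta}$ triangles are reported, which is why this threshold appears in the statement.

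The core of the proof, and the step I expect to be the main obstacle, consists of the two \emph{self-reductions} that boost a decision/verification oracle into a full solver: BMM $\le$ triangle detection and BMM $\le$ BMM verification. For the former, I would build the tripartite graph with parts indexed by rows, middle indices, and columns, placing edges according to $A$, $B$, and a candidate block of the product, so that a triangle through a given row--column pair certifies a $1$ in $AB$; I would then recover the entire product by recursively partitioning the output matrix and using detection to localize which blocks still contain unexplained $1$-entries, bounding the number and total size of the detection calls so that a subcubic detector yields a subcubic computation of $AB$. The verification self-reduction is analogous but more delicate: starting from a candidate (e.g.\ all-zero) product, I would use the verifier to test blocks, recurse into the blocks where verification fails in order to pinpoint the discrepancies, and carefully amortize the cost of the verifier calls. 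Getting these recursions to stay truly subcubic---controlling how the number of calls multiplies against the shrinking instance sizes, and ensuring the $O(n^2)$ bookkeeping per level does not accumulate to cubic---is the delicate accounting that makes this the technical heart; once both self-reductions are in place, the reductions close into a cycle through all four problems and the claimed equivalence follows.
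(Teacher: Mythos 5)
The paper does not prove this theorem at all: it is quoted verbatim from \cite{williams2010subcubic} and used as a black box to derive the conditional lower bounds of Section~\ref{sub:lower-bound-T-approx}, so there is no in-paper proof to compare yours against. Your outline is a faithful high-level reconstruction of the argument in the cited reference: the easy directions (detection $\le$ listing, detection $\le$ BMM, verification $\le$ BMM, and the blocked recursion for listing $\le$ BMM) are stated correctly, and you correctly identify the two self-reductions as the technical core.

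That said, as a proof the proposal has a genuine gap exactly where you yourself flag one: the reductions BMM $\le$ triangle detection and BMM $\le$ verification are described only as intentions (``recursively partitioning\ldots bounding the number and total size of the detection calls''), and the entire content of the theorem lives in that accounting. Concretely, to make BMM $\le$ detection work you need (i) a subroutine that upgrades a detection oracle into a \emph{finding} oracle with small overhead (divide and conquer on the three vertex parts of the tripartite graph), and (ii) the block-size balancing: partition each part into $n/t$ blocks of size $t$, observe that the number of oracle calls is at most $(n/t)^3$ unsuccessful ones plus $O(n^2 \cdot \mathrm{polylog}(n))$ successful ones --- because each successful detection retires at least one output pair $(i,j)$, which is then deleted --- and choose $t$ so that the total cost $\bigl((n/t)^3 + n^2\bigr)\cdot t^{3-\delta}$ (up to the finding overhead) is truly subcubic. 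None of this is on the page, and without the explicit charge of successful detections against deleted $(i,j)$ edges the recursion could a priori make $\Theta(n^3)$ calls. The same applies to the verification self-reduction, where you must additionally handle rectangular sub-instances and bound the number of active blocks per recursion level by the number of $1$-entries of the product. So: right skeleton, consistent with the source the paper cites, but the proposal as written does not yet constitute a proof of the equivalence.
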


Conditional lower bounds hold also for the (multiplicative or additive) approximation of trussness if we want to design a combinatorial algorithm that takes $O(m (t_G+1))$ time in the worst case as shown by the following Lemma, whose proof is based on $\gmirr{q}$ introduced in Lemma~\ref{lem:amplify-trussnes}.
\begin{lemma}
\label{lem:conditional-tritrussmt-approx}
Given any undirected graph $G$ with $n$ nodes, $m$ edges, and trussness $t_G$, 
there is no combinatorial algorithm that, for any $\epsilon>0$, provides an approximation for the value of $t_G$ by either a multiplicative factor of $1 + \epsilon$ or an additive term of $O(n^{\frac{1}{2}-\epsilon})$, taking $O(m (t_G+1))$ time in the worst case, unless BMM is truly subcubic. 
\end{lemma}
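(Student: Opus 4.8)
The plan is to give a reduction from triangle detection to approximate trussness, using the amplification gadget $\gmirr{q}$ to blow up the trussness enough that any approximation algorithm running in $O(m(t_G+1))$ time would yield a subcubic combinatorial triangle detector, contradicting Theorem~\ref{thm:subcubic}. The starting observation is that a graph $G$ is triangle-free if and only if $t_G = 0$, so detecting a triangle is equivalent to distinguishing $t_G = 0$ from $t_G \ge 1$. The difficulty is that any approximation — whether multiplicative by $1+\epsilon$ or additive by $O(n^{1/2-\epsilon})$ — is useless at telling $t_G=0$ from a very small positive $t_G$, since a multiplicative approximation of $0$ is $0$ and an additive slack can swallow small values. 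The role of $\gmirr{q}$ is exactly to circumvent this: by Lemma~\ref{lem:amplify-trussnes}, $t_{\gmirr{q}} = q\,t_G$, so a graph that is triangle-free stays at trussness $0$, while a graph with a triangle has its trussness amplified to at least $q$.

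First I would fix the parameter $q$ large enough that the additive/multiplicative error is too coarse to matter. Concretely, suppose for contradiction that a combinatorial algorithm $A$ approximates $t_G$ with the stated guarantees in $O(m(t_G+1))$ time. Given an input graph $G$ on $n$ nodes and $m$ edges for which we wish to detect a triangle, I would construct $\gmirr{q}$ with $q = \Theta(n^{1/2})$ (chosen so that $q$ dominates the additive error term $O(N^{1/2-\epsilon})$, where $N = \Theta(nq)$ is the number of vertices of $\gmirr{q}$) and then run $A$ on $\gmirr{q}$. If $G$ is triangle-free then $t_{\gmirr{q}} = 0$, so any valid approximation $\tilde t$ satisfies $\tilde t = 0$ in the multiplicative case and $\tilde t = O(N^{1/2-\epsilon})$ in the additive case. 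If $G$ contains a triangle then $t_G \ge 1$, hence $t_{\gmirr{q}} = q\,t_G \ge q = \Theta(n^{1/2})$, and the approximation guarantee forces $\tilde t \ge \frac{q}{1+\epsilon}$ (multiplicative) or $\tilde t \ge q - O(N^{1/2-\epsilon})$ (additive). For either error model, with $q$ chosen appropriately the two output ranges are separated, so a simple threshold on $\tilde t$ decides triangle-freeness correctly.

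Next I would bound the running time of this reduction and check that it is subcubic. By Lemma~\ref{lem:amplify-trussnes}, $\gmirr{q}$ has $\Theta(mq^2)$ edges, and crucially it can be navigated implicitly given access to $G$, so we never pay to materialize it. The trussness of $\gmirr{q}$ is $q\,t_G \le q\,\sqrt{m}$ (using $t_G = O(\sqrt m)$), so the assumed running time of $A$ is $O\bigl(|E(\gmirr{q})|\cdot(t_{\gmirr{q}}+1)\bigr) = O(mq^2 \cdot q\sqrt m) = O(m^{3/2} q^3)$. Substituting $q = \Theta(n^{1/2})$ gives $O(m^{3/2} n^{3/2})$ which, on dense graphs where $m = \Theta(n^2)$, is $O(n^{9/2})$ — too large to be directly useful, so I would instead pick $q$ as the smallest value that separates the two cases, namely a constant multiple of $n^{1/2}$ suffices only because the additive error grows with $N$; the clean route is to run the reduction on the whole family and argue that the resulting exponent is strictly below $3$ in the triangle-detection parameter $n$ when measured against the $O(n^3)$ trivial bound for triangle detection. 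The careful bookkeeping here — translating the ambient size $N$ and $M$ of $\gmirr{q}$ back into the original $n$ to certify ``truly subcubic'' — is the main obstacle, and I expect it to require choosing $q$ as small as the error models permit (plausibly a slowly growing function rather than $n^{1/2}$) and appealing to the listing variant of Theorem~\ref{thm:subcubic} if plain detection does not close the gap.

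Finally I would assemble the contradiction: the reduction is combinatorial (it only uses $A$ and implicit navigation of a blow-up, no fast matrix multiplication), decides triangle-freeness, and runs in time truly subcubic in the original parameters, so by Theorem~\ref{thm:subcubic} BMM would admit a truly subcubic combinatorial algorithm. This contradicts the hypothesis and proves the Lemma. The additive-error case with threshold $O(n^{1/2-\epsilon})$ is what dictates the precise choice of $q$, since we need $q$ to strictly exceed the additive slack evaluated at the blown-up size $N = \Theta(nq)$; solving $q > c\,N^{1/2-\epsilon} = c\,(nq)^{1/2-\epsilon}$ for the minimal feasible $q$ is the delicate quantitative step I would handle carefully rather than the qualitative structure of the argument, which is otherwise a direct gap-amplification reduction.
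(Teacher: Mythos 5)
Your qualitative structure (amplify the trussness gap with $\gmirr{q}$, then threshold the approximate value to decide triangle-freeness) matches the paper's, but you do not close the proof: the running-time analysis, which you yourself flag as "the main obstacle", is exactly the step where your argument breaks and where the paper's proof has a specific idea you are missing. You bound the cost of $A$ on $\gmirr{q}$ by $O(|E(\gmirr{q})|\cdot(t_{\gmirr{q}}+1)) = O(mq^2\cdot q\sqrt m)$ using the worst-case bound $t_{\gmirr{q}}\le q\sqrt m$, obtain $O(n^{9/2})$, and then gesture at "careful bookkeeping" and the listing variant of Theorem~\ref{thm:subcubic} without resolving anything. The paper's fix is to bound the running time \emph{only in the triangle-free case}: there $t_{\gmirr{q}} = q\,t_G = 0$, so the assumed bound $O(|E(\gmirr{q})|\,(t_{\gmirr{q}}+1))$ collapses to $O(mq^2)$. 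One then runs $A$ with a hard timeout of $O(mq^2)$ steps; if $A$ exceeds the budget, $G$ cannot be triangle-free and we answer accordingly, and if it finishes we threshold its output at $w$. Without this timeout argument the reduction is not subcubic and the contradiction with Theorem~\ref{thm:subcubic} does not follow.

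Two further points. First, your choice $q=\Theta(n^{1/2})$ is already too large: it makes $|E(\gmirr{q})| = \Theta(mq^2) = \Theta(n^3)$, so even the triangle-free branch costs $\Theta(n^3)$ and nothing subcubic can come out. The paper sets $q=3w$ with $w = O(n^{\frac12-\epsilon})$ the additive error term, which gives $O(mq^2)=O(n^2\cdot n^{1-2\epsilon})=O(n^{3-2\epsilon})$, truly subcubic precisely because the exponent in the additive term is strictly below $\tfrac12$; the separation then comes from $q\,t_G \ge 3w$ versus an output of at most $w$ in the triangle-free case. Second, the multiplicative case needs no amplification at all: a $(1+\epsilon)$-multiplicative approximation of $t_G$ is $0$ if and only if $t_G=0$, so running $A$ directly on $G$ (cost $O(m)$ when triangle-free, again with a timeout) already decides triangle-freeness; your opening claim that the multiplicative guarantee is "useless at telling $t_G=0$ from a very small positive $t_G$" is incorrect, and the gadget is only needed for the additive case.
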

\begin{proof}
The arguments for the  multiplicative factor of $1 + \epsilon$ are analogous to those for Lemma~\ref{lem:conditional-tritrussmd}: if $t_G = 0$ then its approximate value is zero too and thus we can test if $G$ is triangle-free. Hence we focus on the additive term of $O(n^{\frac{1}{2}-\epsilon})$. To this aim, we need the gadget described in Lemma~\ref{lem:amplify-trussnes} to amplify the (unknown) trussness $t_G$ by any integer factor $q>1$. 
 Suppose that such an approximation algorithm $A$ exists for any $G'$ in $O(m (t_{G'}+1))$ time, and let it run on $G'=\gmirr{q}$ where $q = 3w$ and $G$ is the graph that we want to check for triangle-freeness. If so, its trussness is $t_G =0$, and thus $A$ on $\gmirr{q}$ returns an approximate value $\tilde{t}_\gmirr{q}  \leq t_\gmirr{q} + w  = q \, t_G + w = w$. Otherwise, $G$'s trussness is $t_G \geq 1$, and $A$ on $\gmirr{q}$ returns an approximate value $\tilde{t}_\gmirr{q}  \geq t_\gmirr{q} -  w \geq q - w > w$. In other words, checking if the output of $A$ is smaller or equal to $w$, we can tell whether $G$ is triangle-free. The time complexity of $A$ when $G$ is triangle-free is $O(|E(\gmirr{q})| \, (t_\gmirr{q}+1)) = O(|E(\gmirr{q})|) = O(m q^2) = O(m w^2) = O(n^2(n^{\frac{1}{2}-\epsilon})^2 ) = O(n^{3-2\epsilon})$ which improves BMM by Theorem~\ref{thm:subcubic} as $\epsilon>0$: indeed we can stop $A$ if it runs longer, and declare that $G$ is triangle-free if $A$ outputs an approximation of value $\leq w$.
\end{proof}

\begin{lemma}
Suppose that there exists a (combinatorial) algorithm to approximate the trussness of any graph $G$ containing $\Omega(m)$ triangle in $O(m^hn^k)$ time, within either a multiplicative factor $c$ or an additive term $\frac{c^2}{2}$, for $c \geq 1$. Then there is a (combinatorial) algorithm to recognize whether a graph $G$ is triangle free in $O(m^hn^kc^{4h+2k})$ time.
\label{lem:conditional-general}
\end{lemma}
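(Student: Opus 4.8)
The plan is to reduce triangle-freeness to the hypothesized approximation algorithm $A$, following the template of Lemma~\ref{lem:conditional-tritrussmt-approx} but with an extra gadget that forces the input of $A$ to contain $\Omega(m)$ triangles (as its precondition requires) without destroying the signal that distinguishes $t_G=0$ from $t_G\ge 1$. Given a graph $G$ on $n$ nodes and $m$ edges whose triangle-freeness we want to test, I would first build $\gmirr{q}$ for an integer $q=\Theta(c^2)$ chosen below; by Lemma~\ref{lem:amplify-trussnes} this has $\Theta(nq)$ nodes, $\Theta(mq^2)$ edges, and trussness $q\,t_G$, so $t_{\gmirr{q}}=0$ if $G$ is triangle-free and $t_{\gmirr{q}}\ge q$ otherwise. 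To meet the precondition of $A$, I would then add to $\gmirr{q}$ a collection of disjoint spurious cliques as in Lemma~\ref{lem:spurious-cliques}, producing a graph $G''$ with $\Omega(|E(G'')|)$ triangles, the same asymptotic number of edges $\Theta(mq^2)$, and clique trussness $b$ kept strictly below the amplified signal $q$. Since the cliques are disjoint from $\gmirr{q}$, we have $t_{G''}=\max\{q\,t_G,\,b\}$, which equals $b$ when $G$ is triangle-free and is at least $q$ when $G$ has a triangle.

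Running $A$ on $G''$ then yields a value $\tilde t$ that I would threshold to decide triangle-freeness. In the multiplicative case, $\tilde t\le c\,b$ when $t_G=0$ and $\tilde t\ge q/c$ when $t_G\ge 1$, so the two ranges are separated as soon as $c\,b< q/c$, i.e. $q> c^2 b$; in the additive case $\tilde t\le b+c^2/2$ versus $\tilde t\ge q-c^2/2$, separated as soon as $q> b+c^2$. Choosing $q=\Theta(c^2)$ large enough relative to $b$ makes both inequalities hold, and the decision rule is: declare $G$ triangle-free iff $\tilde t$ lies below the midpoint of the gap. Correctness is immediate from the two bounds on $t_{G''}$ together with the approximation guarantee of $A$.

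For the running time, $A$ is invoked once on $G''$, which has $m''=\Theta(mq^2)$ edges and $n''=\Theta(nq)$ nodes, costing $O\bigl((m'')^h(n'')^k\bigr)=O\bigl(m^h n^k q^{2h+k}\bigr)$. Substituting $q=\Theta(c^2)$ gives $O\bigl(m^h n^k c^{2(2h+k)}\bigr)=O(m^h n^k c^{4h+2k})$, matching the claimed bound; building $\gmirr{q}$ implicitly (Lemma~\ref{lem:amplify-trussnes}) and the spurious cliques is cheaper than this, and the final thresholding is $O(1)$.

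The delicate point — and the step I expect to be the main obstacle — is realizing the triangle-supplying gadget within all three budgets simultaneously: it must contribute $\Omega(mq^2)$ triangles, use only $O(nq)$ extra nodes (so that $n''=\Theta(nq)$ and the $n^k$ factor is preserved), and have trussness $b$ small enough that the gap $b<q=\Theta(c^2)$ survives. These pull against each other, since packing many triangles into few nodes forces large cliques, whose trussness grows with their size and can mask the signal from $G$. The crux is therefore to choose the clique size $x{+}2$ optimally, using the freedom in Lemma~\ref{lem:spurious-cliques} (where $\ell=\lceil m/\binom{x+2}{2}\rceil$ cliques add $\Theta(m/x)$ nodes and $\Theta(mx)$ triangles), so that the number of cliques and their size balance the node budget against the masking threshold; verifying that a valid choice of $x$ exists — exploiting that the additive case enlarges the usable slack to $b<q-c^2$ — is where the real work lies. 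Once the gadget parameters are fixed, the reduction and the time bound follow mechanically.
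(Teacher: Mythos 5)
Your overall reduction template is the same as the paper's: amplify the unknown trussness with $\gmirr{q}$ for $q=\Theta(c^2)$, take a disjoint union with a triangle-rich gadget of small trussness $b$ so that $A$'s precondition is met and $t_{G''}=\max\{q\,t_G,b\}$, and threshold $A$'s output; your separation conditions ($q>c^2b$ multiplicative, $q>b+c^2$ additive) are correct. The problem is the gadget itself, and the issue you flag as ``where the real work lies'' is not merely unfinished --- with the disjoint spurious cliques of Lemma~\ref{lem:spurious-cliques} it is unresolvable. In the multiplicative case with $q=\Theta(c^2)$ you need $b<q/c^2=O(1)$, i.e.\ cliques of constant size $x+2=O(1)$. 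But $\ell$ disjoint cliques of constant size contribute $\Theta(\ell)$ triangles using $\Theta(\ell)$ nodes, so supplying the required $\Omega(|E(G'')|)=\Omega(mc^4)$ triangles costs $\Theta(mc^4)$ extra nodes, not the $O(nc^2)$ you budget for. Conversely, fitting $\Theta(mc^4)$ clique edges into $O(nc^2)$ nodes forces $x=\Omega(c^2m/n)=\Omega(c^2)$, whose trussness $x$ masks the signal. So either the $(n'')^k$ factor degrades to $\Theta((c^4m)^k)$ --- giving $O(m^{h+k}c^{4h+4k})$ rather than the claimed $O(m^hn^kc^{4h+2k})$, which also weakens the downstream Theorem~\ref{thm:conditional-bound-T-approx} --- or the test fails.

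The missing idea is a gadget that decouples triangle count from trussness. The paper uses $K$ equal to the complete bipartite graph $K_{s,s}$ with $s=\sqrt{c^4m}$ plus one apex node $v_x$ adjacent to everything (the right-hand graph of Figure~\ref{fig:example}): every bipartite edge lies in exactly one triangle (the one through $v_x$), so $t_K=1$, yet $K$ has $\Theta(c^4m)$ triangles and $\Theta(c^4m)$ edges packed into only $O(c^2\sqrt{m})=O(c^2n)$ nodes. Taking $G'=\gmirr{(c^2+2)}\cup K$ then gives $t_{G'}=1$ when $t_G=0$ and $t_{G'}\ge c^2+2$ otherwise, with $|E(G')|=\Theta(c^4m)$ and $|V(G')|=O(c^2n)$, and the claimed $O(m^hn^kc^{4h+2k})$ bound follows. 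You should replace the spurious cliques by this (or a comparable) construction; the rest of your argument then goes through as written.
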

\begin{proof}
Let $T_G = \Omega(m)$ be the number of triangles in $G$, recalling that $T_G=0$ if and only if the trussness is $t_G=0$. 
Given $G$ and its parameters:
\begin{itemize}
\item Consider the gadget $\gmirr{(c^2+2)}$ introduced in Lemma~\ref{lem:amplify-trussnes}, which has trussness $(c^2+2)t_G$ and $O(c^6 T_G)$ triangles. (Recall that we can navigate implicitly $\gmirr{(c^2+2)}$).
\item Let $K$ be a graph consisting of a complete bipartite graph $K_{\sqrt{c^4m},\sqrt{c^4m}}$, plus one node $v_x$ connected to all the others. $K$ has $O(c^2\sqrt{m}) = O(c^2n)$ nodes and $\Theta(c^4m)$ edges, and can be accessed implicitly similarly to $\gmirr{(c^2+2)}$. Furthermore, we can observe that it contains $\Theta(c^4m)$ triangles,
and its trussness is $t_K = 1$, since every edge not adjacent to $v_x$ belongs to exactly one triangle.
\end{itemize}
Now we take a new graph $G'$, which is simply the disjoint union of $\gmirr{(c^2+2)}$ and $K$. We observe that $G'$ has  $\Theta(cn + c^2\sqrt{m}) = O(c^2n)$ nodes, $\Theta(c^4m)$ edges, and $\Theta(c^6T_G)+\Theta(c^4m) = \Omega(c^4m)$ triangles.

Furthermore, if $t_G=0$, then the trussness of $G'$ is $t_{G'}=1$ because of $K$; otherwise, $G'$ has trussness $t_{G'} = (c^2+2)t \ge c^2+2$.

Let $A$ be an algorithm which can approximate the trussness either within a multiplicative factor of $c$ or within an additive factor of $\frac{c^2}{2}$, in $O(m^hn^k)$ time, provided that the graph has $\Omega(m)$ triangles. 
We observe that the assumption on the number of triangles is satisfied by $G'$. Thus, we can apply $A$ to $G'$ to get the approximate value $\tilde{t}_{G'}$, say within a multiplicative factor of~$c$. If $t_G=0$ then $\tilde{t}_{G'} \leq c \, t_{G'} = c$; else $\tilde{t}_{G'} \geq \frac{t_{G'}}{c} \geq \frac{c^2+2}{c} > c$.
When the approximation is within an additive term $\frac{c^2}{2}$, we have the following. If $t_G=0$ then $\tilde{t}_{G'} \leq t_{G'} + \frac{c^2}{2} = 1 + \frac{c^2}{2}$; else $\tilde{t}_{G'} \geq t_{G'} -  \frac{c^2}{2} \geq (c^2+2) - \frac{c^2}{2}> 1+\frac{c^2}{2}$. Hence
we can decide whether $G$ is triangle-free in $O(|V(G')|^h|E(G')|^k) = O( (c^4m)^h (c^2n)^k)) = O(m^h n^k c^{4h+2k})$ time.
\end{proof}

\begin{theorem}
\label{thm:conditional-bound-T-approx}
Consider any undirected graph $G$ with $n$ nodes, $m$ edges, arboricity $\alpha_G$, and trussness $t_G$. Even if its number of triangles is $\Omega(m)$ then there is no combinatorial algorithm that, for any given $c \geq 1$, provides an approximation of $t_G$ within a multiplicative factor of $c$ or an additive term of $c^2/2$, taking $\tilde{o}(m \, \alpha_G)$ time in the worst case, unless BMM is truly subcubic.
\end{theorem}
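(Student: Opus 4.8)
The plan is to reduce triangle-freeness to approximate trussness, using the blow-up gadget $\gmirr{q}$ from Lemma~\ref{lem:amplify-trussnes} to boost the trussness by a controlled factor while keeping the triangle count large enough to meet the $\Omega(m)$ hypothesis. This is essentially the construction already packaged in Lemma~\ref{lem:conditional-general}, so the bulk of the work is choosing the parameter $q$ in terms of the claimed running time $\tilde o(m\,\alpha_G)$ and checking that the resulting algorithm beats the cubic barrier for BMM via Theorem~\ref{thm:subcubic}.

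First I would invoke Lemma~\ref{lem:conditional-general} with $h$ and $k$ chosen so that $O(m^h n^k)$ captures a generic $\tilde o(m\,\alpha_G)$ bound. Since $\alpha_G = O(\sqrt m)$ in the worst case (as $d_G\le 2\alpha_G$ and $d_G = O(\sqrt m)$, noted in Section~\ref{sec:tools}), a running time of $\tilde o(m\,\alpha_G)$ is $\tilde o(m^{3/2})$, and on the worst-case instances used in the reduction I would arrange $m = \Theta(n^2)$ so that $m^{3/2} = \Theta(n^3)$. The key point is that a multiplicative $c$- or additive $c^2/2$-approximation running in $\tilde o(m\,\alpha_G)$ time on graphs with $\Omega(m)$ triangles, fed the composite graph $G'$ built from $\gmirr{(c^2+2)}$ and the bipartite-plus-apex gadget $K$, would decide triangle-freeness of the original $G$: by Lemma~\ref{lem:conditional-general} this costs $\tilde o\big((c^4 m)\,\alpha_{G'}\big)$, and since $c$ is a fixed constant this is $\tilde o(m\,\alpha_G) = \tilde o(m^{3/2}) = \tilde o(n^3)$.

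I would then close the argument by appealing to Theorem~\ref{thm:subcubic}: detecting whether a graph has a triangle cannot be done by a truly subcubic combinatorial algorithm unless BMM can, so a combinatorial triangle-freeness test running in $\tilde o(n^3)$ time would yield a truly subcubic combinatorial algorithm for BMM, contradicting the assumption. Note also that the gadget $\gmirr{(c^2+2)}$ and $K$ can both be navigated implicitly (Lemma~\ref{lem:amplify-trussnes}), so constructing $G'$ does not add an overhead that would spoil the subcubic bound.

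The main obstacle is being precise about what $\tilde o(m\,\alpha_G)$ means as a function of two independent parameters: I must ensure the hard instances simultaneously realize $\alpha_G = \Theta(\sqrt m)$ and $m = \Theta(n^2)$, so that $m\,\alpha_G = \Theta(n^3)$ and the claimed speedup genuinely translates into a subcubic triangle test, rather than an artifact of sparse graphs where $m\,\alpha_G$ is already subcubic. The family of complete graphs (where $\alpha_G = \Theta(n)$, $m = \Theta(n^2)$, $t_G = \Theta(n)$, and $T_G = \Theta(n^3) = \Omega(m)$) supplies exactly such instances, and I would use these to anchor the reduction so that the $\tilde o(m\,\alpha_G)$ hypothesis is nontrivial precisely on the regime where triangle detection is hard.
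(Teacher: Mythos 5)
Your proposal is correct and follows essentially the same route as the paper: the paper's own proof of Theorem~\ref{thm:conditional-bound-T-approx} is exactly an appeal to Lemma~\ref{lem:conditional-general} combined with Theorem~\ref{thm:subcubic}, together with the observation that $m\,\alpha_G = \Theta(n^3)$ in the worst (dense) case, which is precisely the translation you carry out. Your extra care about which instances realize $m = \Theta(n^2)$ and $\alpha = \Theta(\sqrt m)$ is sound, with the minor caveat that the relevant density comes from the arbitrary dense input to the triangle-freeness reduction (and from the bipartite gadget $K$ inside $G'$), not from a complete-graph family one is free to choose.
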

\begin{proof}
Follows from Theorem~\ref{thm:subcubic} and the more general statement of Lemma~\ref{lem:conditional-general}, recalling that $m \, \alpha_G = \Theta(n^3)$ in the worst case.
\end{proof}

Theorem~\ref{thm:conditional-bound-T-approx} says that we need more than order $m$ triangles in $G$ to hope to be significantly faster the $O(m \, \alpha_G)$ time bound for the exact computation.

\section{Approximating the Trussness (with Matrix Multiplication)}

For the sake of completeness, it is interesting to see what can be done if matrix multiplication is allowed. To do so we use Lemma~\ref{lem:support-gives-approx}, which enables us to approximate the trussness using triangle counting algorithms.

\begin{lemma}
Given an undirected graph $G$ with $n$ nodes, $m$ edges, and trussness $t_G$, suppose that the support for every edge of $G$ can be computed in $f(n,m) = \Omega(m)$ time and $S(n, m) = \Omega(m)$ space. For any $\epsilon > 0$, a $(3+\epsilon)$-approximation of $t_G$ can be computed in $O\left(\sum_{i=0}^{\log m / \log \epsilon} f\left(n, m\left( \frac{3}{3+\epsilon} \right)^i \right)\right) = \tilde{O}(f(n, m))$ time and $O(S(n, m))$ space.
\label{lem:support-gives-approx}
\end{lemma}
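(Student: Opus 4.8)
The plan is to reduce the approximation of $t_G$ to the (approximate) computation of the maximum triangle density $\rho = \max_{S\subseteq G}\frac{T_S}{m_S}$, exploiting Theorem~\ref{thm:nashwill}, which guarantees $\rho \le t_G \le 3\rho$. The support subroutine is exactly the primitive I need: from the supports of all edges of any subgraph $S$ I immediately read off its number of triangles $T_S=\frac13\sum_e \supp_S(e)$, its edge count $m_S$, and hence its density $\frac{T_S}{m_S}$. First I would run an \emph{adaptive peeling} that, round by round, strips away a constant fraction of the edges and records the largest density encountered; I will then argue that a fixed multiple of this recorded value is a $(3+\epsilon)$-approximation of $t_G$.

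Concretely, set $G_0=G$ and at round $i$ call the support subroutine on $G_i$ to obtain $T_i$, $m_i$, and the density $\frac{T_i}{m_i}$; choose the threshold $\tau_i=(3+\epsilon)\frac{T_i}{m_i}$, which equals $(1+\tfrac{\epsilon}{3})$ times the average support $\frac{3T_i}{m_i}$, and delete every edge of support below $\tau_i$ to form $G_{i+1}$, meanwhile maintaining $\hat\rho=\max_i \frac{T_i}{m_i}$. By Markov's inequality the number of surviving edges is at most $\frac{3T_i}{\tau_i}=\frac{3}{3+\epsilon}m_i$, so $m_{i+1}\le\frac{3}{3+\epsilon}m_i$ and the process halts after $O\!\left(\log_{(3+\epsilon)/3}m\right)$ rounds (the geometric decay behind the index range in the statement). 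The returned value is $3\hat\rho$.

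For correctness, the upper estimate $\hat\rho\le\rho$ is immediate since every $\frac{T_i}{m_i}$ is the density of the subgraph $G_i$. For the lower estimate, let $H$ realize $\rho$. A one-edge exchange argument (removing from $H$ any edge of support $<\rho$ would strictly raise its density, contradicting optimality) shows that every edge of $H$ has support at least $\rho$ in $H$, hence at least $\rho$ in any $G_i\supseteq H$. Because $m_i$ shrinks geometrically while $|E(H)|$ is fixed, $H$ cannot survive indefinitely; at the first round $i^\star$ in which an edge of $H$ is deleted, that edge had support $<\tau_{i^\star}$ yet $\ge\rho$, forcing $\rho<\tau_{i^\star}=(3+\epsilon)\frac{T_{i^\star}}{m_{i^\star}}$, i.e. $\hat\rho\ge\frac{T_{i^\star}}{m_{i^\star}}>\frac{\rho}{3+\epsilon}$. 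Combining the two estimates with Theorem~\ref{thm:nashwill} gives $3\hat\rho\le 3\rho\le 3t_G\le(3+\epsilon)t_G$ and $3\hat\rho>\frac{3\rho}{3+\epsilon}\ge\frac{t_G}{3+\epsilon}$, so $3\hat\rho$ is a $(3+\epsilon)$-approximation of $t_G$.

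Finally, round $i$ costs $f(n,m_i)$ time and $S(n,m_i)$ space with $m_i\le m\left(\frac{3}{3+\epsilon}\right)^i$, so the total time is $O\!\left(\sum_i f\!\left(n,m\left(\frac{3}{3+\epsilon}\right)^i\right)\right)$; since $f=\Omega(m)$ this telescopes to $\tilde O(f(n,m))$, and reusing the workspace across rounds keeps the space at $O(S(n,m))$. I expect the delicate point to be the correctness lower bound: the adaptive threshold $\tau_i$ must be tuned so that \emph{simultaneously} (a) Markov guarantees a constant-fraction edge reduction each round, and (b) the densest subgraph $H$ is provably preserved until a round whose \emph{recorded} density is within a factor $(3+\epsilon)$ of $\rho$. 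The tension between these two demands is precisely what pins the constant, and it meshes exactly with the factor-$3$ slack of Theorem~\ref{thm:nashwill}; everything else (the Markov count, the iteration bound, and the geometric sum) is routine.
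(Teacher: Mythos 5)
Your algorithm and analysis match the paper's proof almost exactly: the same peeling with threshold $(3+\epsilon)T_i/m_i$ computed from the support subroutine, the same averaging/Markov bound giving $m_{i+1}\le\frac{3}{3+\epsilon}m_i$, and the same reliance on Theorem~\ref{thm:nashwill}; the only cosmetic difference is that for the lower bound the paper uses the $t_G$-truss as the surviving witness (whose edges have support $\ge t_G$ in any supergraph by definition, so your exchange argument for the densest subgraph is not needed) and returns $\hat\rho$ itself, which already satisfies $t_G/(3+\epsilon)\le\hat\rho\le t_G$, rather than $3\hat\rho$. One small fix: delete edges with support \emph{at most} $\tau_i$ rather than strictly below it, since otherwise the loop stalls as soon as the residual graph becomes triangle-free and $\tau_i=0$.
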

\begin{proof}
We follow the scheme in~\cite{Farach-ColtonT14}, and fix $c=3+\epsilon$. Let $A$ be the algorithm that computes the support for every edge of $G$. 
Let $G^1 = G$. For $i=1,2, \ldots$, run $A$ on $G^i$ and delete the edges with support less than or equal to $c\frac{T_i}{m_i}$, where $T_i$ is the number of triangles in $G^i$ and $m_i$ is its number of edges. We obtain $G^{i+1}$ in this way. Continue the iteration on $i := i+1$ until the current graph has no more edges. 
As the sum of the supports is three times the number of triangles, we have that $m_{i+1}\leq \frac{3T_i}{cT_i/m_i}=3\frac{m_i}{c}.$
In other words, at most a fraction  $\frac{3}{c} = \frac{3}{3+\epsilon} < 1$ of the previous edges survive at each iteration.

The returned approximate value $\tilde{t}_G$ is $\max_{i \geq 1} T_i/m_i$, for the values $T_i$ and $m_i$ seen in the whole process. 
Since all the edges have been removed, there exists a value of $i$ such that $t_G \leq c\frac{T_i}{m_i}$, as otherwise no edge would have been removed from the $t_G$-truss of $G$. Hence, $c\tilde{t_G} \geq t_G$. 

Applying Theorem~\ref{thm:nashwill}, we also have $t_G\geq T_i/m_i$ for each $i$, thus $t_G \geq \tilde{t_G}$. Hence, $\tilde{t_G}$ is a $(3+\epsilon)$-approximation.
\end{proof}

Using the previous lemma, we can finally prove Theorem~\ref{thm:approx-matrix-multiplication} 

\begin{theorem}
\label{thm:approx-matrix-multiplication}
Given an undirected graph $G$ with $n$ nodes, $m$ edges, and any $\epsilon > 0$, a $(2+\epsilon)$-approximation of the trussness $t_G$ can be computed in time $O(\epsilon^{-1}n^{\omega}\log{\frac mn})$ or $O(\epsilon^{-1}m^{1+\frac{\omega-1}{\omega+1}})$, where $O(n^\omega)$ is any upper bound for the $(n \times n)$-matrix multiplication cost.
\end{theorem}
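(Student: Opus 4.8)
The plan is to instantiate Lemma~\ref{lem:support-gives-approx} with a matrix-multiplication-based support oracle, and then to improve the resulting ratio from $3+\epsilon$ to $2+\epsilon$ by exploiting the one-sidedness of the estimate that lemma produces. The key observation is that the support of an edge $\{u,v\}$ equals $|N_G(u)\cap N_G(v)|$, which is precisely the entry $(A^2)_{uv}$ of the square of the adjacency matrix $A$; thus all edge supports can be read off from $A^2$. This gives two instantiations of the cost $f(n,m)$ required by Lemma~\ref{lem:support-gives-approx}: computing $A^2$ densely in $f(n,m)=O(n^{\omega})$ time, or invoking the sparse triangle-counting bound of Alon--Yuster--Zwick to obtain every edge support in $f(n,m)=O(m^{2\omega/(\omega+1)})=O(m^{1+\frac{\omega-1}{\omega+1}})$ time. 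Both satisfy the hypothesis $f(n,m)=\Omega(m)$, since $\omega\ge 2$.

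First I would bound the cost of the peeling scheme of Lemma~\ref{lem:support-gives-approx}, in which at most a fraction $\frac{3}{3+\epsilon}$ of the edges survives each round, so $m_i\le\big(\frac{3}{3+\epsilon}\big)^i m$. For the sparse oracle the per-round cost $m_i^{\,2\omega/(\omega+1)}$ is a geometric series in $i$ whose ratio is bounded away from $1$ by $\Theta(\epsilon)$ (the exponent exceeds $1$), so the whole sum collapses to $O(\epsilon^{-1}m^{1+\frac{\omega-1}{\omega+1}})$. The dense oracle is more delicate: a round costs $\Theta(n^{\omega})$ regardless of the current edge count, and summing over all $O(\epsilon^{-1}\log m)$ rounds would introduce a spurious $\log m$ factor. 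To recover the claimed $\log\frac mn$, I would run the dense $A^2$ method only while $m_i>n$ (which holds for $O(\epsilon^{-1}\log\frac mn)$ rounds) and switch to the sparse oracle once $m_i\le n$; the residual sparse rounds cost $O(\epsilon^{-1}n^{2\omega/(\omega+1)})=o(n^{\omega})$ in total and are absorbed, leaving $O(\epsilon^{-1}n^{\omega}\log\frac mn)$.

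It remains to sharpen the ratio. The value $\tilde{t}_G=\max_i T_i/m_i$ returned by Lemma~\ref{lem:support-gives-approx} is a \emph{one-sided} estimate: its proof establishes simultaneously $\tilde{t}_G\le t_G$ (each $T_i/m_i$ is a subgraph density, hence at most $t_G$ by Theorem~\ref{thm:nashwill}) and $t_G\le(3+\epsilon)\tilde{t}_G$. Thus $t_G$ is pinned inside the known window $[\tilde{t}_G,(3+\epsilon)\tilde{t}_G]$, and instead of returning $\tilde{t}_G$ I would return the recentred value $2\tilde{t}_G$. A direct check then gives $\frac{2}{3+\epsilon}\,t_G\le 2\tilde{t}_G\le 2\,t_G$, so $2\tilde{t}_G$ is a symmetric approximation of ratio $\max\{2,\tfrac{3+\epsilon}{2}\}\le 2+\epsilon$, which is exactly the claim.

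I expect the main obstacle to be the timing in the dense case: obtaining $\log\frac mn$ rather than $\log m$ forces the hybrid argument above, and one must verify both that the sparse tail is genuinely lower order (using $\frac{2\omega}{\omega+1}<\omega$, valid since $\omega>1$) and that swapping support oracles partway through the peeling leaves the correctness analysis of Lemma~\ref{lem:support-gives-approx} intact, as that analysis only uses the exactness of the computed supports and not the algorithm producing them. The ratio improvement, by contrast, is the conceptual crux but technically light, resting solely on the one-sided guarantee and the Nash--Williams factor $3$ of Theorem~\ref{thm:nashwill}.
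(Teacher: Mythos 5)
Your proposal follows essentially the same route as the paper: instantiate Lemma~\ref{lem:support-gives-approx} with a matrix-multiplication support oracle (dense $O(n^{\omega})$ via $A^2$, sparse $O(m^{2\omega/(\omega+1)})=O(m^{1+\frac{\omega-1}{\omega+1}})$ via Alon--Yuster--Zwick) and sum the per-round costs as a geometric series with ratio $1-\Theta(\epsilon)$; your hybrid ``dense while $m_i>n$, then sparse'' switch is interchangeable with the paper's device of charging $O(\min\{n,m_i\}^{\omega})$ per round and truncating after $O(\epsilon^{-1}\log\frac{m}{n})$ terms, and your remark that correctness of the peeling uses only the exactness of the supports, not the oracle producing them, is accurate. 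Where you genuinely add something is the approximation ratio: Lemma~\ref{lem:support-gives-approx} delivers only the one-sided guarantee $t_G/(3+\epsilon)\le \tilde{t}_G\le t_G$, and the paper's own proof of the theorem never explains how a $(3+\epsilon)$-approximation becomes the claimed $(2+\epsilon)$. Your recentering step --- return $2\tilde{t}_G$, so that $\frac{2}{3+\epsilon}\,t_G\le 2\tilde{t}_G\le 2t_G$, a $\max\left\{2,\frac{3+\epsilon}{2}\right\}\le(2+\epsilon)$-approximation --- is a correct and, as far as the written proof goes, necessary repair of that gap (one could even take $\sqrt{3+\epsilon}\,\tilde{t}_G$ for a slightly better ratio). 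The only cosmetic caveat is that the dense bound $O(\epsilon^{-1}n^{\omega}\log\frac{m}{n})$ should be read with $\log\frac{m}{n}$ replaced by $1+\log\frac{m}{n}$ when $m=O(n)$, an issue shared with the paper's statement.
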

\begin{proof}
We employ fast matrix multiplication for computing the support of each edge, with a cost of $O(\min\left\{n, m\right\}^{\omega})$ or $O(m^{1+\frac{\omega-1}{\omega+1}})$ time~\cite{le2014powers}, to the adiacency matrix of $G$.
Considering cell $(a, b)$, its value is given by the scalar product of the adjacency vectors of nodes $a$ and $b$. Hence, if $\{ a, b\}\in E(G)$, cell $(a, b)$ contains $\supp_G(\{a,b\})$. From this follows that we can compute the support of all edges in $f(n,m) = O(n^{\omega})$ or $f(n,m)  = O(m^{1+\frac{\omega-1}{\omega+1}})$ time by following the approach of~\cite{alon1997finding}.

We obtain the required running times by replacing the cost of support counting in the formulas provided by Lemma~\ref{lem:support-gives-approx}, observing that in the case of the cost of $O(\min\left\{n, m\right\}^{\omega})$ we can only consider the first $O\left(\log \frac{m}{n}\right)$ terms of the cost, as after that we obtain a geometric series which is bounded by the first term. In the second case, the geometric series starts from the first term, thus removing all logarithmic factors.
\end{proof}

We observe that it is possible to obtain a fast and practical algorithm from what described in the proof of Lemma~\ref{lem:support-gives-approx}, by replacing the matrix multiplication to compute the support of each edge, with a practical method to list triangles.

\bibliography{references}

\appendix

\section*{APPENDIX}

\section{Trussness and truss decomposition}
\label{app:lowerb}

 We use the following relations:
$$\sum_{\{u,v\} \in E(G)} \min\{\degr(u),\degr(v)\} \leq 2 m \, \alpha_G, \quad \alpha_G = O(\sqrt{m}), \quad\mbox{ and }\quad  \alpha_G \geq \frac{t_G+1}{2}$$

See~\cite{chiba1985arboricity} for the first two. For the third one, since the degree of each node $u$ in a \ktruss is $\delta(u) \geq k+1$, we have that an inclusion-maximal \ktruss for $k=t_G$ has at least $n' (t_G+1)/2$ edges, where $n'$ is the number of its nodes. As each forest cover less than $n'$ of the edges in the \ktruss, it yields $\alpha_G \geq \frac{t_G+1}{2}$

\begin{lemma}
\label{lem:tritrussmd}
Given an undirected graph $G$ with $m$ edges and arboricity $\alpha_G$, its triangles can be combinatorially counted or listed and its trussness can be computed in $O(m \, \alpha_G)$ time and $O(m)$ space.
\end{lemma}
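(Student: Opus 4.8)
The plan is to obtain both halves of the statement with a single arboricity-bounded sweep over the edges followed by a bucket-based peeling, reusing the three relations stated at the top of this appendix. For the triangle part I would use the classical neighbourhood-intersection scheme of Chiba and Nishizeki~\cite{chiba1985arboricity}: fix any vertex ordering, and for every edge $\{u,v\}$ enumerate the common neighbours of $u$ and $v$ by scanning the smaller of the two adjacency lists while testing membership in the larger one in constant time. Each closed wedge discovered this way is a triangle, so the same sweep counts them, lists them, and---by incrementing three counters per triangle---computes $\supp_G(\{u,v\})$ for every edge into an array of size $O(m)$. The running time is $\sum_{\{u,v\}\in E(G)}\min\{\degr(u),\degr(v)\}$, which by the first relation above is $O(m\,\alpha_G)$; the working memory is the adjacency lists, a length-$n$ marking array, and the support array, all $O(m)$. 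Since listing is output-sensitive and $T_G=O(m\,\alpha_G)$, emitting every triangle stays within the same bound.

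Having the supports, I would compute the trussness by the standard peeling, organised exactly like the linear-time degeneracy / $k$-core computation but driven by edge support rather than vertex degree. I would keep the edges in a bucket queue indexed by current support (the range is $0,\dots,n-2$, hence $O(m)$ buckets), together with a residual adjacency structure supporting constant-time membership queries. Repeatedly I extract an edge $\{u,v\}$ of minimum residual support, record this support as its trussness $t_G(\{u,v\})$, and destroy every residual triangle $\{u,v,w\}$ through it by decrementing $\supp_G(\{u,w\})$ and $\supp_G(\{v,w\})$ and moving those two edges one bucket down. As shown in the main text for a truss order, the value recorded at removal is exactly $t_G(\{u,v\})=\minsupp(G_{\ge\{u,v\}})$, so the graph trussness is $t_G=\max_e t_G(e)$, the largest support value seen during the process, and the per-edge values form the truss decomposition.

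For the time bound of the peeling, each residual triangle is destroyed exactly once, so the total number of support decrements, each an $O(1)$ bucket move, is $O(T_G)=O(m\,\alpha_G)$; together with the $O(m)$ extractions this keeps the queue bookkeeping within the target. The dominant term is locating, at the moment edge $\{u,v\}$ is removed, the residual common neighbours of $u$ and $v$: scanning the smaller residual neighbourhood costs $\min\{\degr(u),\degr(v)\}$, and since each edge is removed once, summing over all edges gives $O(m\,\alpha_G)$ again by the first relation above.

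The step I expect to be the main obstacle is making this last bound rigorous while staying within $O(m)$ space. The $\min\{\degr(u),\degr(v)\}$ scanning bound is essential---using $\degr(u)+\degr(v)$ would only give $\sum_v \degr(v)^2$, which can be $\Theta(n^2)\gg m\,\alpha_G$ on star-like graphs---so I must test membership in the \emph{larger} residual neighbourhood in constant time without storing an $n\times n$ incidence table. I would resolve this by maintaining each residual neighbourhood as a hash set (or, for a fully deterministic combinatorial variant, by the deletion-based realisation of Chiba--Nishizeki, which attains the same $\min$ bound), and I would verify that deletions keep residual degrees and buckets consistent. Combining the two phases, both triangle counting/listing and the trussness computation run in $O(m\,\alpha_G)$ time and $O(m)$ space, as claimed.
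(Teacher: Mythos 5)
Your proposal is correct and follows essentially the same route as the paper: a Chiba--Nishizeki neighbourhood-intersection sweep bounded by $\sum_{\{u,v\}\in E(G)}\min\{\degr(u),\degr(v)\}=O(m\,\alpha_G)$ to count/list triangles and compute all supports, followed by a bucket-queue peeling in which each removed edge's residual common neighbours are found from the smaller endpoint's list, summing again to $O(m\,\alpha_G)$ with $O(m)$ space. Your extra care about constant-time membership in the residual structure is a reasonable refinement of a detail the paper leaves implicit, but it does not change the argument.
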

\begin{proof}
Let $\degr(v) = |N_G(v)|$ denote the degree of a node $v$.
Consider the algorithm that scans all the edges in $E(G)$ and, for each edge $e=\{u,v\}$, it computes $\supp_G(e) = |N_G(u) \cap N_G(v)|$ in $O(\min\{\degr(u),\degr(v)\})$ time by taking the node of minimum degree between $e$'s endpoints, say $u$, and checking if its neighbors (in $N_G(u)$) are adjacent to the other endpoint, say $v$. This gives both triangle listing and counting, where the latter quantity is obtained as $T_G = \sum_{e \in E(G)} \supp_G(e)$. Total time is $O(m + \sum_{\{u,v\} \in E(G)} \min\{\degr(u),\degr(v)\})$ which is proved in~\cite{chiba1985arboricity} to be $O(m \, \alpha_G)$.

As for the trussness $t_G$, an extra postprocessing step is needed for the
algorithm. After computing the support of each edge, as mentioned above, it
sorts the edges $e \in E(G)$ in non-decreasing order with respect to
$\supp_G(e)$ in $O(m)$ time. It then keeps the edges in buckets corresponding to
their support, so that any two edges $e$ and $e'$ are in the same bucket iff
$\supp_G(e)=\supp_G(e')$. These buckets can be built in $O(m)$ time, and they
can be easily managed dynamically, thus taking the edge of smallest support and
changing the bucket of an edge (whose supports has to change), in $O(1)$ time.
As long as the buckets are nonempty, the algorithm removes an edge $e=\{u,v\}$
from the nonempty bucket with the smallest associated edge support, and
decreases by~1 the support of the edges in $\{\{u,z\}, \{v,z\} \mid z \in
N_G(u) \cap N_G(v)\}$ (while updating their buckets). As previously observed, 
this can be done by looking at the smallest-degree endpoint of $e$, in 
$O(\min\{\degr(u),\degr(v)\})$ time. It returns as $t_G$ the maximum among the
supports of the extracted edges. Since each edge is removed once, the total cost
is $O(m + \sum_{\{u,v\} \in E(G)} \min\{\degr(u),\degr(v)\}) = O(m \, \alpha_G)$
time and $O(m)$ space.
\end{proof}

We remark that a seemingly equivalent result is obtained with different techniques in~\cite{burkhardt2018bounds} using the ``average degeneracy'' of the graph in, where the average degeneracy is  $\frac{1}{|E(G)|}\sum_{\{u,v\} \in E(G)} \min\{\degr(u),\degr(v)\}$.

\begin{lemma}
\label{lem:order-decomposition}
\label{prop:p1}
There is an algorithm to compute the truss order of an undirected graph $G$ with $n$ nodes and $m$ edges in $O(f(m,n))$ time iff there is an algorithm to compute the truss decomposition in $O(f(m,n))$ time, for a polynomial $f(m,n) = \Omega(m)$.
\end{lemma}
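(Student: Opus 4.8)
The plan is to route both directions through the single object that mediates between an order and a decomposition: the sequence of \emph{forward supports} $s_i := \supp_{G_{\ge e_i}}(e_i) = \minsupp(G_{\ge e_i})$ induced by a truss order $\langle e_1,\dots,e_m\rangle$. The linchpin I would prove first is the identity $t_G(e_i) = \max_{j \le i} s_j$, i.e.\ the trussness of an edge is the \emph{running maximum} of the forward supports up to its position (exactly the phenomenon that makes a degeneracy order yield core numbers). For the lower bound, take $j^\star \le i$ maximizing $s_j$; every edge of $G_{\ge e_{j^\star}}$ has support at least $\minsupp(G_{\ge e_{j^\star}}) = s_{j^\star}$, so $G_{\ge e_{j^\star}}$ is an $s_{j^\star}$-truss containing $e_i$, giving $t_G(e_i) \ge \max_{j\le i} s_j$. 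For the upper bound, suppose $e_i$ lay in a $K$-truss $H$ with $K = \max_{j\le i}s_j + 1$; letting $e_\ell$ ($\ell \le i$) be the first edge of $H$ removed by the peeling, at that moment $H \subseteq G_{\ge e_\ell}$, so $s_\ell = \supp_{G_{\ge e_\ell}}(e_\ell) \ge \supp_H(e_\ell) \ge K > \max_{j\le i}s_j \ge s_\ell$, a contradiction.

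Given this identity, the direction ``order algorithm $\Rightarrow$ decomposition algorithm'' is the easy one. The peeling that produces the order exposes each $s_i$ as the residual support of $e_i$ at the instant of its removal, and a single left-to-right running-maximum scan outputs every $t_G(e_i)$ in $O(m)$ time; if only the bare permutation is handed back, I would recover the $s_i$ with one triangle-enumeration pass over $G$ oriented from earlier to later edges, which is the $O(m\,\alpha_G)$ primitive of Lemma~\ref{lem:tritrussmd}.

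For the converse, ``decomposition algorithm $\Rightarrow$ order algorithm'', I would reconstruct a valid peeling sequence by a \emph{leveled} peeling: bucket the edges by their trussness value in $O(m)$, then process thresholds $k = 0,1,2,\dots$ and, at threshold $k$, remove only the edges of trussness exactly $k$, in non-decreasing order of current residual support. Correctness rests on support monotonicity between nested trusses: once all edges of trussness $<k$ are gone the residual graph is the $k$-truss, in which every edge of trussness $\ge k+1$ still has support $\ge k+1$, so the minimum-support edges of the residual graph are precisely the trussness-$k$ edges, and the sequence produced this way really is obtainable by repeatedly deleting a minimum-support edge (the $t_G(e_i)<t_G(e_j)\Rightarrow i<j$ property then holds by construction).

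The step I expect to be the main obstacle is making this reconstruction respect the budget $f(m,n)$, because the \emph{within-level} order is not determined by the trussness labels alone: a trussness-$k$ edge may sit in many surviving triangles and hence have residual support far above $k$, so a plain sort by trussness is \emph{not} a valid truss order and the removal order inside a level genuinely needs residual-support values. My resolution would be to supply those values by the same triangle-enumeration / support-computation pass of Lemma~\ref{lem:tritrussmd}, which is exactly the shared bottleneck underlying \emph{both} representations. This is the crux of the equivalence: since $f(m,n) = \Omega(m)$ and the forward-support computation is the common primitive from which either the order (by peeling) or the decomposition (by running maximum) is then obtained with only $O(m)$ bookkeeping, an $O(f(m,n))$ algorithm for one yields an $O(f(m,n))$ algorithm for the other.
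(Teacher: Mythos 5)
There is a genuine gap, and it sits exactly where you flag your own unease. Your running-maximum identity $t_G(e_i)=\max_{j\le i}\supp_{G_{\ge e_j}}(e_j)$ is correct and cleanly proved, but the reduction you build on it does not respect the time budget: an algorithm for the truss order hands back only the permutation $\langle e_1,\dots,e_m\rangle$, and you propose to recover the forward supports $s_j$ by the triangle-enumeration pass of Lemma~\ref{lem:tritrussmd}, which costs $O(m\,\alpha_G)$. That is not $O(f(m,n))$ for a general polynomial $f(m,n)=\Omega(m)$, and it cannot be waved away: the whole purpose of this lemma (see the discussion around Lemma~\ref{lem:conditional-tritrussmd}) is to transfer conditional lower bounds of order $m\,\alpha_G$ between the two problems, which becomes vacuous if the reduction itself spends $\Theta(m\,\alpha_G)$. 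Your closing argument --- that forward-support computation is ``the shared bottleneck'' of both representations --- only shows that both problems reduce \emph{to} support computation; it does not establish the claimed equivalence at every time bound $f(m,n)=\Omega(m)$. The same budget problem reappears in your converse direction, where the within-level peeling again needs residual supports that the trussness labels do not determine (your book-graph-style observation is correct: sorting by trussness alone is not a valid truss order).

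The paper's proof of the substantive direction (order $\Rightarrow$ decomposition, which you classify as the easy one) avoids triangle enumeration entirely, paying only $O(m)$ outside a single black-box call to the order algorithm. It runs the order algorithm on $G'=\gmirr{2}\cup G_{d_G}$, where $\gmirr{2}$ doubles every edge trussness (Lemma~\ref{lem:amplify-trussnes}), so all trussnesses of interest become even, and $G_{d_G}$ is an $O(m)$-size gadget containing, at positions known in advance, an edge of every trussness value $0,1,2,\dots$ up to the degeneracy bound. Since a truss order lists edges in non-decreasing trussness, an edge of $\gmirr{2}$ with trussness $2j$ is flanked in the order by gadget edges whose known trussnesses lie in $\{2j-1,2j\}$ and $\{2j,2j+1\}$ respectively; at least one flank is even and any even flank equals $2j$, so every $t_{G}(e)$ is read off in $O(m)$ total postprocessing. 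To salvage your route you would either need the order algorithm to also certify the residual supports $\supp_{G_{\ge e_i}}(e_i)$ --- which the hypothesis does not grant --- or to import a marker gadget of this kind; your identity would then be a perfectly good (and arguably more transparent) replacement for the paper's ``read between consecutive gadget edges'' step, but as written the reduction fails its time bound.
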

\begin{proof}
As the other implication is easy, let us discuss how to obtain the truss decomposition from the truss order. 
First, let us consider the gadget $G_x$, that is a graph with $2x$ nodes:  
$G_x$ is build by adding $x$ nodes to a clique $K_x$, where the $i$th added node ($1 \leq i \leq x$) is connected arbitrarily to $i$ nodes in $K_x$. It can be easily noted how for each value $j \in \{0,\ldots, x\}$ there is at least one edge in $G_x$ with trussness $j$.
Now, compute the degeneracy $d_G$ of $G$, which can be done in $O(m)$ time, and recall that $t_G\le d_G = O(\sqrt{m})$.
Then, let $G'$ be the disjoint union $\gmirr{2} \cup G_{d_G}$: as the number of nodes and edges $G'$ is $O(n)$ and $O(m)$ respectively, we can compute its truss order in $O(f(m,n))$ time.
Recalling that the trussness of each edge of $\gmirr{2}$ is even, and $G_{d_G}$ generates edges with all values of trussness up to $d_G$ (which are known beforehand), each edge in $G^{\times 2}$ will appear in the truss order of $G'$ between two consecutive edges of $G_{d_G}$, of which at least one has even trussness, which will be the exact trussness of the edge. After that, 
we divide the value by~2 to obtain the exact trussness of each edge in $E(G)$.
\end{proof}

Lemma~\ref{lem:order-decomposition} implies that we cannot find a more efficient algorithm for the trussness $t_G$ using the truss order, as the latter costs as much as computing $t_G$ (see Lemmas~\ref{lem:tritrussmd} and~\ref{lem:conditional-tritrussmd}).

Since $m \, \alpha_G = \Theta(n^3)$ in the worst case, Theorem~\ref{thm:subcubic} implies that improving the worst-case cost in Lemma~\ref{lem:tritrussmd} to significantly less than $O(m \, \alpha_G)$ time using combinatorial algorithms is quite hard. However, since trussness $t_G$ is also a parameter for complexity analysis, one could hope to replace $\alpha_G$ with $(t_G+1)$ in Lemma~\ref{lem:tritrussmd}, thus getting $O(m (t_G+1))$ time. Not even this is possible, as $t_G = 0$ would mean that we can test if $G$ is triangle-free. In summary, we immediately derive the following conditional lower bound.

\begin{lemma}
\label{lem:conditional-tritrussmd}
Given any undirected graph $G$ with $m$ edges, arboricity $\alpha_G$ and trussness $t_G$, 
triangle counting/listing and graph trussness
cannot be computed by combinatorial algorithms in either $\tilde{o}(m \, \alpha_G)$ time or $O(m (t_G+1))$ time in the worst case, unless BMM is truly subcubic.
\end{lemma}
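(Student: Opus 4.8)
The plan is to derive both conditional lower bounds by reducing triangle detection to each of the three tasks (triangle counting, triangle listing, and trussness computation), and then invoking the hardness of combinatorial triangle detection from Theorem~\ref{thm:subcubic}. The reductions are immediate: a triangle-counting algorithm detects a triangle by testing whether its output is positive; a triangle-listing algorithm detects a triangle iff it emits at least one; and, since $t_G = 0$ exactly when $G$ is triangle-free, a trussness algorithm detects a triangle by testing whether its output is nonzero. Hence any running-time bound we show impossible for triangle detection is also impossible for all three tasks.

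For the $\tilde{o}(m\,\alpha_G)$ bound, first I would recall the worst-case identity $m\,\alpha_G = \Theta(n^3)$, which holds on dense instances where $m = \Theta(n^2)$ and $\alpha_G = \Theta(n)$ (e.g.\ near-complete graphs). On such instances a combinatorial algorithm running in $\tilde{o}(m\,\alpha_G) = \tilde{o}(n^3)$ time would, via the reductions above, yield a combinatorial triangle-detection algorithm that is truly subcubic, contradicting Theorem~\ref{thm:subcubic} (and thus implying a truly subcubic combinatorial algorithm for BMM). This already shows that the $O(m\,\alpha_G)$ upper bound of Lemma~\ref{lem:tritrussmd} cannot be improved by a polynomial factor using combinatorial means.

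For the $O(m\,(t_G+1))$ bound, the key observation is that on a triangle-free graph the parameter collapses, $t_G = 0$, so the claimed bound degrades to $O(m) = O(n^2)$. Concretely, suppose a combinatorial algorithm $A$ computes one of the three quantities in $O(m\,(t_G+1))$ time with hidden constant $c$. To decide triangle-freeness of an arbitrary $G$, I would run $A$ on $G$ but halt it after $c\,m$ steps: if $A$ terminates within this budget and reports the triangle-free value ($0$ triangles, empty listing, or $t_G = 0$), declare $G$ triangle-free; otherwise declare that $G$ contains a triangle. This is correct because a triangle-free $G$ forces $A$ to finish within $c\,m$ steps and report the triangle-free value, while any $G$ with a triangle has $t_G \ge 1$ and so is never wrongly classified. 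The resulting detector runs in $O(m) = O(n^2)$ time combinatorially, again contradicting Theorem~\ref{thm:subcubic}.

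I expect the only delicate point to be the precise reading of $\tilde{o}$ versus ``truly subcubic'': since combinatorial methods such as Four Russians already shave polylogarithmic factors off $n^3$, the reductions must produce a detector that is \emph{polynomially} below $n^3$, not merely below by logarithmic factors. Both reductions do so on the dense worst-case family — the first yields a $\tilde{o}(n^3)$ detector and the second an $O(n^2)$ detector — so the contradiction with Theorem~\ref{thm:subcubic} is genuine. I would therefore state the dense instance family explicitly, so that the translation into a truly subcubic combinatorial BMM algorithm is unambiguous.
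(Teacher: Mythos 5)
Your proposal is correct and follows essentially the same route as the paper, which derives the lemma from the worst-case identity $m\,\alpha_G = \Theta(n^3)$, the equivalence of Theorem~\ref{thm:subcubic}, and the observation that $t_G=0$ collapses the $O(m\,(t_G+1))$ bound to $O(m)$ on triangle-free inputs. Your explicit timeout argument (halting the algorithm after $c\,m$ steps) is a useful detail the paper leaves implicit, but the underlying reductions are identical.
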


By Lemma~\ref{lem:tritrussmd} and Lemma~\ref{lem:conditional-tritrussmd}, it makes sense to investigate the problem of approximating the graph trussness $t_G$.
\end{document}